\documentclass[final,5p,times,twocolumn,authoryear]{elsarticle}

\usepackage{amsmath}
\usepackage[nopatch]{microtype}
\usepackage{booktabs}
\usepackage{graphics}
\usepackage{listings}
\usepackage{pgfplots}
\usepackage{algorithm2e}
\usepackage{subcaption}
\usepackage{multicol}
\usepackage{blindtext}
\usepackage{tablefootnote}
\usepackage{amsthm}
\usepackage{amssymb}
\usepackage{algorithm2e}
\usepackage{tabularx}
\usepackage[hidelinks]{hyperref}

\newtheorem{theorem}{Theorem}
\newcommand{\pluseq}{\mathrel{{+}{=}}}

\begin{document}

\begin{frontmatter}
	\title{A signal dedispersion algorithm for imaging-based transient searches}
	%\title{Streaming high time-resolution imaging dedispersion (STRIDE)}
	
	\address[pawseyaff]{Pawsey Supercomputing Research Centre, Kensington, 6151, Western Australia, Australia}
	\address[curtinaff]{International Centre for Radio Astronomy Research, Curtin University, Bentley, 6102, Western Australia, Australia}
	\address[aussrcaff]{Australian SKA Regional Centre (AusSRC), The University of Western Australia, 35 Stirling Highway, Crawley WA 6009, Australia}
	\address[skaoaff]{Square Kilometre Array Observatory (SKAO), Kensington, 6151, Western Australia, Australia}
	
	\author[pawseyaff,curtinaff]{Cristian Di Pietrantonio}
	\ead{cristian.dipietrantonio [at] csiro.au}
	
	\author[pawseyaff,aussrcaff]{Marcin Sokolowski}
	% \alsoaffiliation{Joint first authors}
	
	%% From here, in alphabetical order
	
	\author[pawseyaff]{Christopher Harris}
	
	\author[curtinaff,skaoaff]{Danny C. Price}
	
	\author[curtinaff,skaoaff]{Randall Wayth}

	\begin{abstract}
		Dedispersion is the computational process of correcting for the frequency-dependent time delay affecting a radio signal that propagates through the interstellar and intergalactic media. It is a crucial component of transient search pipelines that maximises the signal-to-noise ratio, especially when targeting highly dispersed signals: for instance, pulsar emissions making their way through a dense cloud of ionised gas, and fast radio bursts travelling cosmological distances. This paper introduces Streaming high Time-Resolution Imaging DEdispersion (STRIDE), a novel dedispersion algorithm to generate per-pixel dedispersed time series from high time and frequency resolution interferometric images. Unlike straightforward approaches to image dedispersion, STRIDE does not involve expensive manipulation of the input data layout, such as explicitly building dynamic spectra or shifting images. Furthermore, it is the first dedispersion algorithm to partition a dispersive sweep over the time dimension, in addition to frequency. As a consequence, images corresponding to the entire time span of the target dispersive delay are not required all at once. Instead, the algorithm works with an arbitrarily-sized subset of images at a time, adopting an incremental, streaming-based approach to dedispersion. In evaluating STRIDE on the presented test case, it is shown that the minimum memory requirement is reduced by 97.9\%, going from 684.5\,GB to 14.4\,GB. As current and future generations of widefield interferometers increasingly turn to imaging techniques for detection and localisation of radio transients, STRIDE positions itself as a strong alternative to traditional dedispersion methodologies. It arguably is the only viable option for imaging-based searches with low-frequency instruments such as the Murchison Widefield Array (MWA) and low-frequency Square Kilometre Array (SKA-Low).
		
	\end{abstract}

\end{frontmatter}
% You are required to provide a concise and factual abstract which does not exceed 250 words. The abstract should briefly state the purpose of your research, principal results and major conclusions. Some guidelines:

% TODO: fine a name for the algorithm!

%STRIDE

%Suggests forward motion and incremental progress.

\section{Introduction}\label{sec:introduction}

% Introduce the problem: signal dispersion through frequencies
Interferometric imaging techniques present a computational advantage of at least two orders of magnitude compared to beamforming (\cite{DiPietrantonio2025};  \cite{GPUImager}; Section 3.4 in \cite{Price2024}). Hence, these are being investigated for or already adopted in non-targeted searches for fast radio bursts (FRBs) and other transients, in conjunction with widefield interferometers and all-sky transient monitors \citep{Sherman2025,Wang2025,Sokolowski2022, Sokolowski2021, Thyagarajan2017}. A sequence of high time resolution snapshot images can be used to generate time series associated with each pixel, representing the sky brightness over time in the corresponding line of sight. These are then searched for transient events occurring anywhere in the imaged field of view (FoV).

\begin{figure}[h]
	\includegraphics[width=0.9\linewidth,trim={0, 2.5em, 0, 4em}]{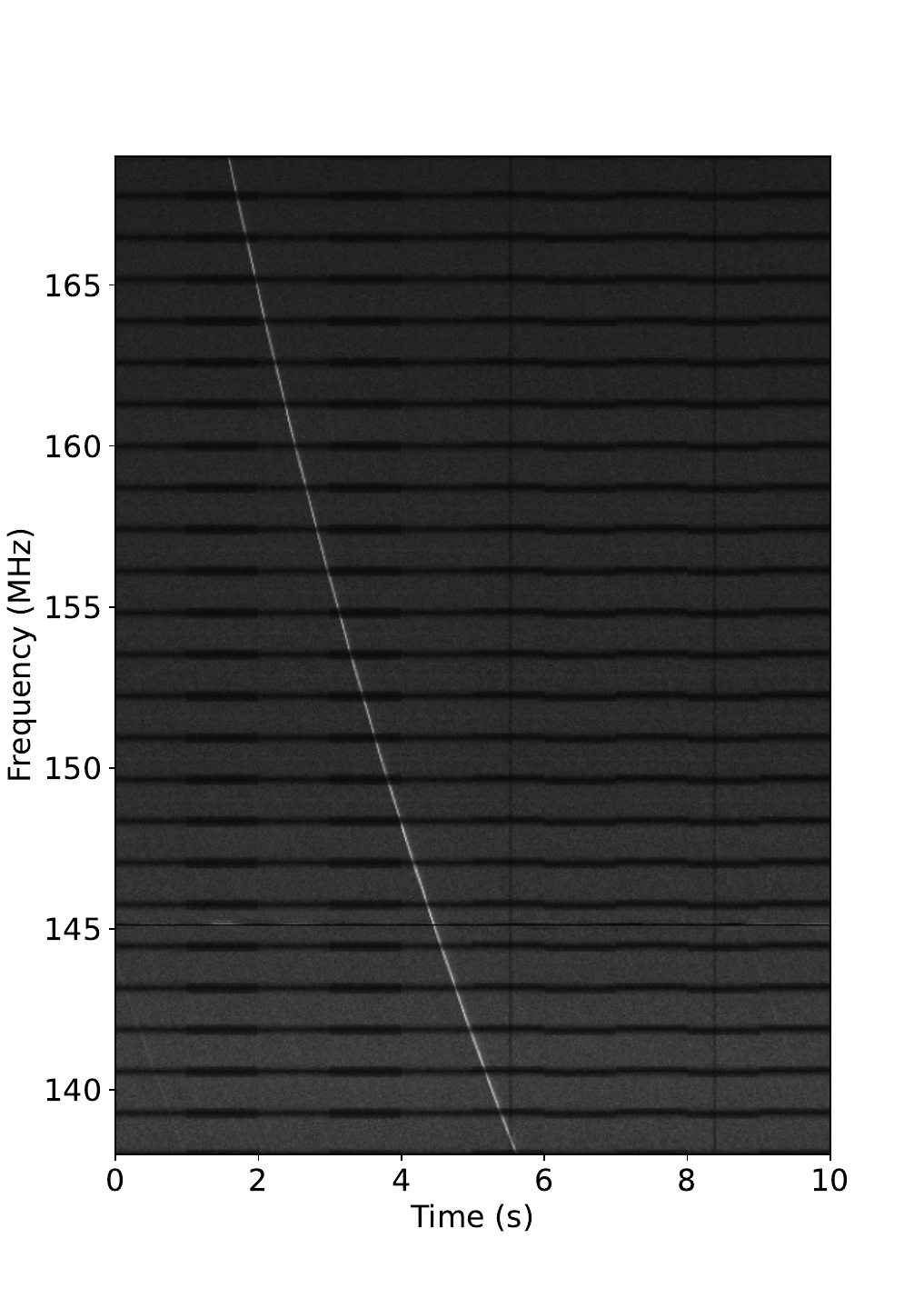}
	\caption{\textbf{A giant pulse from the Crab pulsar captured with the MWA}. A dynamic spectrum plots the observed signal intensity as a function of frequency and arrival time. Astronomical signals manifest themselves as quadratic sweeps because of the dispersive delay caused by the interstellar and intergalactic media. The Crab pulsar has a DM of approximately 57\, pc\,cm$^{-3}$. The dispersive delay across the 30.72\, MHz band centred at 154.25\, MHz is 4 seconds. The giant pulse and the fainter ones pictured in the dynamic spectrum were detected during a non-targeted transient search test using the algorithm presented in this work and implemented in the BLINK imaging pipeline \citep{DiPietrantonio2025, GPUImager}.}
	\label{fig:crab_ds}
\end{figure}

Within this framework, dedispersion techniques are essential to capture radio signals such as FRBs, whose frequency components are dispersed in time while moving along the propagation path. Because the dispersion effect grows quadratically as the signal moves towards lower frequencies, the time it takes a signal to traverse a wide band at frequencies below 300\,MHz, as observed by the Murchison Widefield Array (MWA), is in the order of tens of seconds. Figure \ref{fig:crab_ds} demonstrates the effect on a giant pulse of B0531+21, also known as the Crab pulsar. Furthermore, transients of extragalactic origin are characterised by a large dispersion measure (DM), ranging from 100's to 1000's of pc cm$^{-3}$ due to cold plasma within the intergalactic medium, exacerbating the dispersive delay. These factors make dedispersion computationally challenging.

% Chris comment: exacerbating is not a common word

Furthermore, in a non-targeted search for radio transients, also referred to as a blind search, one does not know their location, when they occur, nor their associated dispersion measure. The dedispersion process must then be performed for every DM value in a set of DM trials, for every pixel, and for every time bin, while at the same time considering tens of seconds worth of observation at any given time to account for the dispersive delay.

% cold plasma dispersion delays lower tadio frequencies by t \prop DMv^{-2} 
%	requires compensating this chromatic delay before detection

% mention lower frequencies? Mention images as input are unconventional for dedispersion algorithms?
This paper presents a novel approach to signal dedispersion in the image domain: Streaming high Time-Resolution Imaging DEdispersion (STRIDE). The algorithm iterates over batches of images referred to as image sets, capturing the FoV at different frequencies and time bins, and incrementally computes dedispersed time series for each pixel and DM trial. Based on the brute-force dedispersion algorithm, this method aims at minimising memory allocation and data movement requirements, making dedispersion efficient and effective in the image domain.

The paper is structured as follows. Section \ref{sec:related_work}  provides an overview of the state of the art on dedispersion. Section \ref{sec:dedisp_problem} formalises the incoherent dedispersion problem, and Section \ref{sec:motivation} motivates the need for a new approach. Section \ref{sec:math_framework} lays out the mathematical formulation necessary to define the novel algorithm. Then, the paper moves on to describing STRIDE in Sections \ref{sec:algorithm},  \ref{sec:ring_buffer}, and \ref{sec:generalisation}. Section \ref{sec:parallelisation} highlights the compute parallelisation strategy adopted in the authors' implementation of the algorithm. STRIDE has been evaluated on a test case and experiment results are presented in Section \ref{sec:results}. The paper ends with a discussion and closing remarks in Section \ref{sec:conclusion}.

\section{Related work} \label{sec:related_work}

A dedispersion technique may be classified as coherent or incoherent. The first class of algorithms interprets the interstellar medium as an all-pass filter that changes the phase of a radio signal on a per-frequency basis, resulting in the characteristic frequency-dependent time delay. Applying the inverse filter on voltage data, where phase and amplitude information is retained, recovers the signal as emitted by the source  \citep{ HANKINS1975,Hankins1971}. While it has the benefit of perfectly reversing dispersion, coherent dedispersion is computationally expensive. Modern software implementations make use of GPU acceleration \citep{Zhang2024, Bassa2017, Straten2011}. % Advanced signal processing techniques based on overlap-save are also employed \citep{Zhou2024}. Further, heuristics are used to determine optimal FFT parameters  \citep{Zhang2023}, and I/O performance when operating on large volume of data is also kept in consideration \citep{Kong2023}.

Incoherent dedispersion occurs after voltages have been fine channelised and transformed to sky intensity information through beamforming-based techniques or interferometric imaging. At this stage, phase information needed for exact signal correction is lost. On the other hand, data has most likely undergone a time and frequency averaging operation that reduces the computational cost of dedispersion. A frequency-dependent time shift is applied on intensity time series of each frequency channel. While the technique corrects for inter-channel delay, the dispersive smearing within each cannot be addressed. Hence, the choice of time and frequency resolution impacts the signal-to-noise ratio of the recovered signal.

There exist two main classes of incoherent dedispersion algorithms. Firstly, brute-force approaches loop over every start time bin and DM trial, possibly in parallel, and integrate across frequency channels following the path defined by the associated dispersive delay. Multiple implementations of brute force dedispersion that leverage high-performance computing hardware are present in the literature and routinely used for transient searches \citep{Novotny2023, Sclocco2016,  Barsdell2012, Magro2011}.  Incoherent, brute-force Fourier-domain dedispersion has also been proposed \citep{Bassa2022}.

Secondly, tree-based algorithms compute dispersive sweeps across the entire band in an iterative and hierarchical manner. Starting from small groups of adjacent frequency channels, partial sweeps are computed across each of those. At each subsequent stage, these partial results are combined to produce longer sweeps across a progressively larger number of consecutive channels.

The original algorithm approximates the dispersion curve with a piece-wise linear function, resulting in a loss of sensitivity \citep{Taylor1974}. The fast discrete dispersion measure transform (FDMT) algorithm \citep{Zackay2017} follows a similar tree-based strategy but makes use of the dynamic programming technique \citep{cormen2009introduction}.  The final integrated intensity values for each DM and time bin are built bottom-up using intermediate results stored in memory.  FREDDA  \citep{fredda} is the accepted primary implementation of FDMT. ESAM is a generalisation of FDMT allowing the definition of arbitrary summation masks other than the one entailed by the dispersive delay equation \citep{Gupta2024}. Heimdall is a widely-used software implementing both brute-force and tree-based methodologies, in addition to more sophisticated hybrid solutions \citep{Barsdell2012}. 

All incoherent dedispersion methodologies work with dynamic spectra because they were designed to counter a frequency-dependent temporal effect. For blind searches of radio transients to transition from beamforming to the more efficient imaging-based techniques, a dedispersion algorithm working in the image domain is required. \citet{Zackay2017} envision two ways for FDMT to operate within an imaging pipeline. The first approach is to re-arrange intensity data to form a dynamic spectrum for every pixel. The authors concede the solution is too expensive, especially when large images are coupled with the high time and frequency resolution scales involved in fast transient searches. The second option is to execute dedispersion in the visibility space. Algorithms may exploit the fact that, for sparse interferometers, most cells in the visibility grid have a zero value. Furthermore, by performing dedispersion on visibilities, the number of 2D FFT executed is reduced if there are more frequency channels than DM trials. It comes at the cost of managing the chromatic aberration associated with a wide FoV or a wide observing band. This is the solution adopted by CRACO, the imaging-based real-time transient backend for ASKAP \citep{Wang2025}, and the realfast pipeline deployed at VLA \citep{Law2018}. As interferometer core areas become densely packed with antennas, and the number of DM trials increases, the diminishing advantages of this approach are outweighed by the complexity of the implementation.

% A synonym is feeding

This paper introduces a novel incoherent dedispersion method that ingests images and produces dedispersed time series for every pixel and DM trial. Unlike previous suggestions in the literature, it does not require reshaping the intensity array. It also does not explicitly shift images in time to correct for the delay, for every DM trial. The algorithm handles the natural output layout of imaging pipelines and minimises data movement and memory requirements.

% Incoherent dedispersion is efficient and good enough to detect bright dispersed signal.

% Imaging based techniques are computationally more efficient 

\section{Incoherent dedispersion} \label{sec:dedisp_problem}

A radio signal travelling through intergalactic and interstellar media is subject to a frequency-dependent time delay that disperses the signal power in time. The delay is a frequency-dependent function $\tau (\nu_{lo}, \nu_{hi})$,

\begin{equation}\label{eq:time_delay}
	\tau(\nu_{lo}, \nu_{hi}) = K\, DM \left( \nu_{lo}^{-2} - \nu_{hi}^{-2} \right),
\end{equation}

where $K$ is a product of physical constants, $DM$ is the integrated electron density along the line of sight, known as dispersion measure, and $\nu_{lo}$ and $\nu_{hi}$ are the lower and upper frequencies of the frequency band the delay applies to.
% Introduce quantised problem
During digital signal processing, the time and frequency dimensions are quantised into a finite number of $T$ and $F$ bins of size $\Delta t$ and $\Delta \nu$, respectively. Hence, the frequency-dependent time delay can be discretized as

\begin{equation}\label{eq:channel_delay}
	\delta (f) = \left\lceil   \frac{\tau(\nu_f,   \nu_f + \Delta \nu)}{\Delta t} \right\rceil,
\end{equation}

where $\nu_f$ is the frequency at the lower end of the $f$-th frequency channel, $f \in \mathcal{F}$, $\mathcal{F} = \{1, \ldots, F\}$. The frequency channel $F$ is at the highest end of the frequency band, whereas channel $1$ sits at the lowest one.

A dynamic spectrum $D$, $D: \mathcal{F} \times \mathcal{T} \to \mathbb{R}$, $\mathcal{T} =  \{1, \ldots, T\}$,  illustrates the signal intensity coming from a given direction in the sky as a function of frequency and time bins. The dispersion of a radio signal manifests as a quadratic sweep across the dynamic spectrum, as shown in Figure \ref{fig:crab_ds}. 

% Introduce the dynamic spectrum
To recover the time series $S(t)$ of the total signal intensity one must reverse the time delay effect while accumulating $D(f, t)$ over frequency channels:

% TODO change o to something else

\begin{equation}\label{eq:time_series_full_dedisp}
	S(t) = \sum_{f = 1}^F \sum_{d=0}^{\delta(f) - 1}  D(f, t + g(f) + d),
\end{equation}

 $t \in \mathcal{T}$, where the summation over $d$ accounts for the situation when the signal persists in frequency channel $f$ for longer than a single time bin, and  $g: \mathcal{F} \to \mathbb{N}_0$,

\begin{equation}\label{eq:cumulative_delay}
g(f) = \begin{cases}
	0, & \text{if } f = F \\
	\left\lceil  \frac{\tau(\nu_{f + 1},  \nu_F + \Delta \nu)}{\Delta t} \right\rceil - 1, & \text{otherwise,}
\end{cases}
\end{equation}

is the time, in units of time bins, the sweep takes from entering the frequency band to just before reaching channel $f$. In the present work we adopt the convention that a signal enters frequency channel $f$ at the same time bin it leaves the previous channel $f + 1$.

\begin{figure*}[h!]
	\centering
	\includegraphics[width=\linewidth]{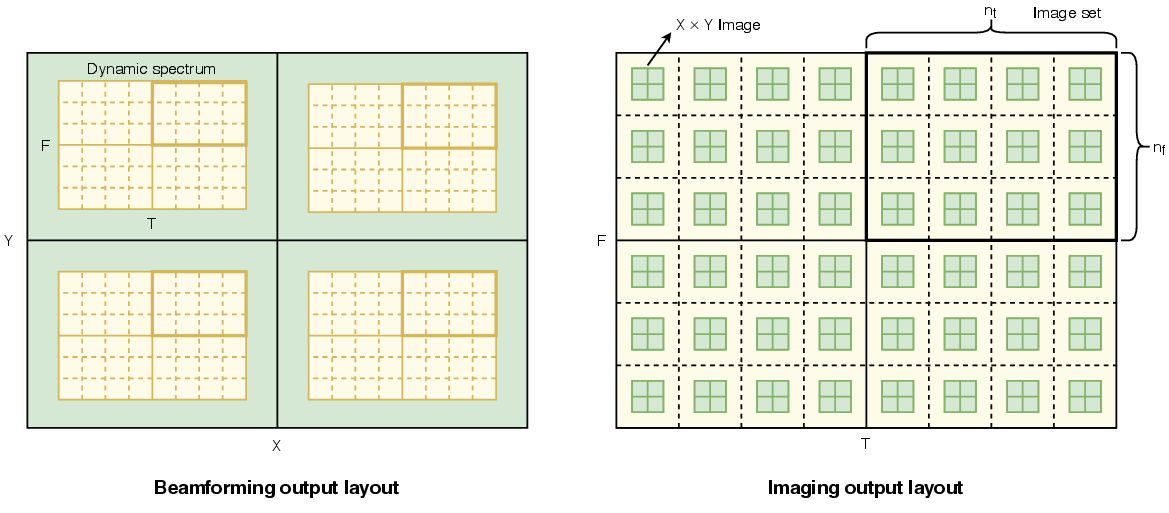}
	\caption{\textbf{Beamforming and imaging output data layouts.} Beamforming produces a dynamic spectrum of dimensions $F \times T$ for each sky pointing. Time and frequency information is contiguously stored in memory, whereas dynamic spectra do not need to be, and typically are not, held in memory at the same time (left diagram). A high time and frequency resolution imaging pipeline generates images of dimensions $X \times Y$, one for each fine channel and time bin. A pixel value within an image encodes the signal intensity at the associated frequency channel, time bin, and direction in the sky (right diagram). The number of pixels does not place a memory requirement on the processing of dynamic spectra produced through beamforming because they are processed independently. Conversely, image size poses a limitation on how many images can be simultaneously held in memory, and hence the number $n_t$ of time bins and the number $n_f$ of frequency channels that are available at any given time for dedispersion. STRIDE partitions images in image sets containing $n_t n_f$ images each and that can entirely reside in memory.}
	\label{fig:data_layout}
\end{figure*}

\section{Motivation} \label{sec:motivation}

%TODO: aving images means there is a time series for each pixel => millions of pixels. Give some example of existing interferometers for comparison. Also compare with beamforming set ups.

% introduction & assumptions
Transient searches based on beamforming \citep{Bhat2023, Bezuidenhout2022} partition the FoV in a regular grid of dimensions $X \times Y$. Each cell is covered by a tied-array beam resulting in fine-channelised time series whose time and frequency information is stored in a dynamic spectrum of dimensions $F \times T$. Hence, the beamforming output is a 4D array of dimensions $X \times Y \times F \times T$. The left diagram of Figure \ref{fig:data_layout} illustrates the described data layout. Traditional dedispersion algorithms assume a dynamic spectrum is explicitly constructed and kept in memory, at least enough of it to cover the targeted maximum dispersive delay. This is the case for beamforming searches where a dynamic spectrum has its entries placed contiguously in memory and hence it can be independently stored and processed. 

High time and frequency resolution imaging generates a snapshot image of the FoV for each time bin and frequency channel, resulting in a 4D array of dimensions $F \times T \times X \times Y$, shown in the right diagram of Figure \ref{fig:data_layout}. In other terms, sky brightness measurements across all pointings and relative to a certain time bin and frequency channel are tightly packed in memory, but time-frequency information of each pointing has a stride of $XY$ elements. The array must be reshaped to the beamforming output layout for traditional dedispersion algorithms to work. However, the operation is computationally prohibitive when dealing with hundreds of thousands of pixels per image, due to the high angular resolution of widefield interferometers, and tens of thousands of images generated by the high time and frequency resolution imaging process \citep{DiPietrantonio2025}.

% We cannot keep in memory all the images related to the dispersive delay length
% TODO add some numerical example
The problem is exacerbated at frequencies below 300\,MHz where the dispersive delay in extragalactic radio signals can be in the tens of seconds. Several hundred thousand images are needed to cover the full band and that time interval in millisecond-time and kHz-frequency resolution. The resulting volume of data exceeds the allocatable memory available on most computing systems. Table \ref{tab:crab_requirements} in Section \ref{sec:results} quantifies these requirements for a transient search case with the MWA. In such situations one cannot generate and hold on to all the images necessary to have a complete dynamic spectrum for every pixel.

This work presents a novel dedispersion algorithm that makes imaging-based fast transient searches at low frequencies feasible. Instead of requiring all images covering the entire frequency band and the dispersive delay to be readily available in memory, the algorithm works with a subset of those at a time, called an image set. Dedispersed time series are computed incrementally as new image sets are made available. Once an image set has been processed, it can be discarded to make space for a new one. 

An image set captures only a small 2D section of a dynamic spectrum, but it does so for all $XY$ dynamic spectra.  For instance, consider the imaging output array depicted in the right-side diagram of Figure \ref{fig:data_layout}. It is divided into four image sets, delineated with black solid lines. The top-right image set, highlighted with a bolder contour, corresponds to the top right sections of the dynamic spectra in the beamforming output, marked with bold solid lines.

There are three problems to be tackled: (a)  computing dedispersion for a given DM iteratively and incrementally, with only a portion of the dynamic spectrum made available at each iteration; (b) running the process for multiple DMs and on multiple partial dynamic spectra at the same time; and (c) efficiently dealing with time-frequency information not being contiguous in memory. The novelty and complexity of the algorithm comes from solving the first point. Then, it can be generalised to handle multiple DMs and dynamic spectra, hence pixels, addressing points (b) and (c).

\section{Mathematical framework}
\label{sec:math_framework}

\begin{figure}[h!]
	\centering
	\includegraphics[width=\linewidth]{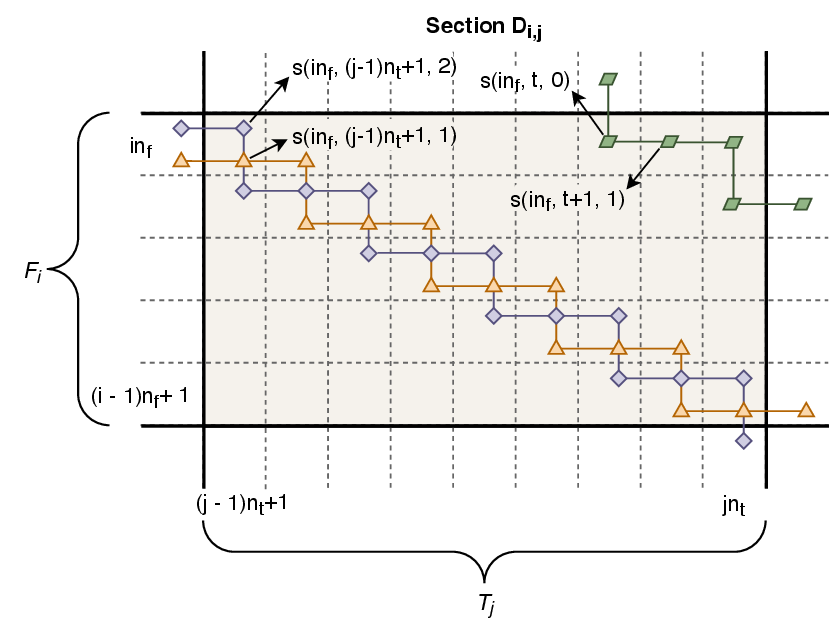}
	\caption{\textbf{Visual representation of a section.} Only a 2D portion of size $n_t \times n_f$ of a dynamic spectrum is available at any given time. It covers a subset $\mathcal{F}_i$ of the original interval $\mathcal{F}$ of frequency channels, and a range $\mathcal{T}_j$, $\mathcal{T}_j \subseteq \mathcal{T}$, of time bins. Sweeps cross the section along paths defined by the dispersive delay and accumulate intensity values associated with those. In this example, each frequency channel is characterised by a discrete time delay of 3 time bins. That is, $\delta(f) = 3,\ \forall f$. The two side sweeps of channel $in_f$ are depicted with an indigo diamond and orange triangle patterns. These have entered the channel in the time adjacent section $D_{i,j-1}$ and crossed the boundary with the current one. A top sweep, represented with a green parallelogram pattern, crosses the top right corner of the section. The sweep enters the top channel in the time interval encompassed by the section, traverses it entirely, and enters channel $in_f -1$ before leaving the section. It will become a side sweep of section $D_{i, j + 1}$.}
	\label{fig:section}
\end{figure}

This section formalises the dedispersion problem for a fixed DM in terms of partial dynamic spectra. It also introduces a new way of identifying  and classifying sweeps that cross a section of a dynamic spectrum. Then, sweeps and sections are used to define the dedispersed time series, which is the final output of the dedispersion algorithm. Table \ref{tab:list_of_symbols} in the Appendix keeps track of the notation used throughout this work.

A dynamic spectrum $D$ can be partitioned over frequency channels and time bins in two-dimensional sections $D_{i,j}$ of size $n_f \times n_t$, $i \in \{ 1, \ldots,  F/n_f \}$, $j \in \{ 1, \ldots,  T/n_t\}$. A section $D_{i,j}$ covers a subset $\mathcal{F}_i$ of the frequency channels encompassed by  $D$,  and a subset $\mathcal{T}_j$ of all time bins. Figure \ref{fig:section} illustrates the described setting.

Formally, let 
\begin{equation}\label{eq:freq_interval}
	\mathcal{F}_i = \{(i - 1)n_f + 1, \ldots, i n_f\}
\end{equation}
	and  
\begin{equation}\label{eq:time_interval}
	\mathcal{T}_j =  \{(j - 1)n_t + 1, \ldots, j n_t \}.
\end{equation}
Then,
\begin{multline}
	 D_{i,j}:  \mathcal{F}_i  \times \mathcal{T}_j \to \mathbb{R},\  D_{i,j}(f, t) = D(f, t)\\ \forall (f, t) \in  \mathcal{F}_i  \times \mathcal{T}_j.
\end{multline}

% TODO Specify which one is the low end of the frequency channel

Intensity values are accumulated on paths across the partial spectrum $D_{i,j}(f, t)$, referred to as sweeps. These are portions of complete sweeps that could be computed across $D$, if it were fully available.

%TODO create image showing the two types of sweeps within a seciton of the dynamic spectrum. Picture the dynamic spectrum grayed out except for the currently active section. Sweeps in different colours. Might be worth doing two images of the same concept: one in continuous form, and the other in discrete form. Or just one??

Any sweep entering section $D_{i,j}$ of the dynamic spectrum falls in exactly one of the following categories.

\begin{enumerate}
	\item Top sweeps that enter the top frequency channel $in_f$ of the section at a time $t$ encompassed by $D_{i,j}$. That is,  $ t \in \mathcal{T}_j$. A top sweep is illustrated in Figure \ref{fig:section} on the top right corner of the section using a green parallelogram pattern.
	\item Side sweeps that enter a channel $f$ covered by the section,  $f \in \mathcal{F}_i$, at some time $t$ previous to the start time of the section. That is, $t  < (j - 1)n_t + 1$. Two side sweeps are shown in Figure \ref{fig:section} through an indigo diamond pattern and an orange triangle one.
\end{enumerate}

There are $n_t$ top sweeps crossing a section. Furthermore, there are $\delta(f) - 1$ side sweeps for each frequency channel $f$, where $\delta(f)$ is the discrete time delay within that frequency channel in units of time bins. A proof of the statement is provided in \ref{sec:proofs}.

% Compute the corresponding start time of sweeps
A sweep $s(f, t, d)$ going across $D$ can be uniquely identified by the frequency channel $f$ it is traversing at time bin $t$, and by the number of time bins $d$ the sweep has already spent in $f$ prior to $t$. For instance, the set $A_{i,j}$ of all the top sweeps of $D_{i,j}$ is defined as

\begin{equation}\label{eq:top_sweep_set}
A_{i,j} = \{s(in_f, t, 0) : t \in \mathcal{T}_j\}.
\end{equation}
On the other hand, all side sweeps of $D_{i,j}$ are contained in the set $B_{i,j}$,

\begin{multline}\label{eq:side_sweep_set}
B_{i,j} = \{s(f, (j-1)n_t + 1, d) : \\ f \in\mathcal{F}_i\ \land\ d \in \{1, \dots, \delta(f) - 1  \} \}.
\end{multline}

Furthermore, the same sweep can be identified by different tuples. As an example, a sweep crossing channel $f$ in two time bins $t$, $t + 1$ before reaching the adjacent lower channel $f- 1$ can be identified as

\begin{equation}
	s(f, t, 0) = s(f, t + 1, 1) = s(f -1, t + 1, 0).
\end{equation}

Any sweep $s(f, t, d)$ across D and its sections enters the frequency band through the top channel $F$ at time $T(f, t, d)$,

\begin{equation}\label{eq:start_time}
	T(f, t, d) = t - d - g(f).
\end{equation}

Hence, 

\begin{equation}
	s(f, t, d) = s(F, T(f, t, d), 0).
\end{equation}

The dedispersed time series $S(t)$, $t \in \mathcal{T}$, already defined in Equation \ref{eq:time_series_full_dedisp}, can be rewritten using the new notation. For each time bin $t$, a new sweep $s(F, t, 0)$ enters $D$ through the top channel $F$ and crosses the spectrum along a predefined path. Coordinates $(p, q)$ of the entries of $D$ associated to $s(F, t, 0)$ are enumerated by exploiting the fact that the same sweep is identified by a unique tuple at each point along its path.

Formally,

\begin{equation}\label{eq:total_intensity_as_triple_sum}
	S(t) = \sum_{p, q\ :\ \exists d\ s(p, q, d) = s(F, t, 0)} D(p, q).
\end{equation}

However, only a section $D_{i,j}$ is available at a time. What can be computed is the contribution $S_{i,j}(t)$ of section $D_{i,j}$ to the total intensity $S(t)$ of a sweep $s(F, t, 0)$:

\begin{equation}	
 S_{i,j}(t) = \sum_{\substack{p, q\ :\ \exists d\ s(p, q, d) = s(F, t, 0)\ \land\\ (p, q)\ \in  \mathcal{F}_i  \times \mathcal{T}_j}} D_{i,j}(p, q).
\end{equation}

Furthermore, sweep $s(F, t, 0)$ either crosses section $D_{i,j}$ or it does not. In the former case, $s(F, t, 0)$ is either a top sweep or a side sweep of $D_{i,j}$. In the latter case, the section does not add any contribution. Finally, a sweep must cross at least one section of $D$. Then, the total intensity of a sweep entering $D$ at time bin $t$ can be expressed as the sum of contributions from all sections:

\begin{equation}\label{eq:sum_of_parts}
	S(t) = \sum_{i,j} S_{i,j}(t).
\end{equation}

%formalised as
%
%\begin{equation}
%	\exists s(x, y, d)\  s(x, y, d) = s(F, t, 0)\ \land\ s(x, y, d)\ \in A_{i,j}\cup B_{i,j}.
%\end{equation}

% now what I want to say is that the above equations account for all the contributions?

% Also one want to prove that to compute all possible I_ij(s) one as to look at side sweeps and top sweeps
% Each continuing/top sweep belongs to only one complete sweep <=> there is at most one subsweep per subregion belonging to a given complete sweep
%  

% TODO: consider introducing algo 2 (compute_partial_sweep) before algo 1

\section{The STRIDE algorithm}
\label{sec:algorithm}

\begin{figure*}
	\includegraphics[width=\linewidth]{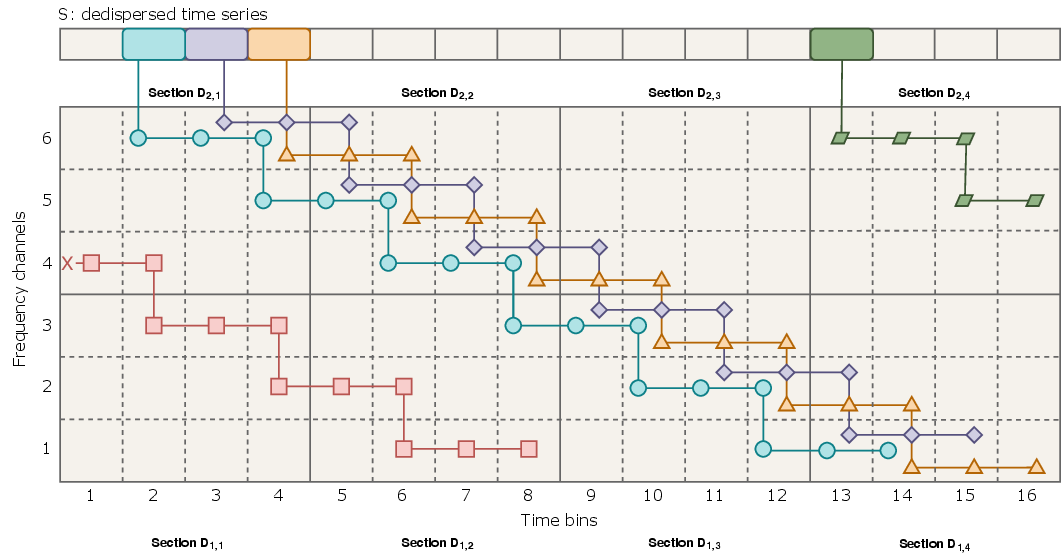}
	\caption{\textbf{Example execution of the dedispersion algorithm.} The full dynamic spectrum $D$, spanning $T = 16$ time bins and $F = 6$ frequency channels, is partitioned in 8 sections of dimensions $n_t = 4$, $n_f = 3$. Aligned on top of $D$ is the array holding the accumulated intensities across sweep paths for a fixed DM value and for all start time bins. To simplify the example, the discrete time delay is set to be the same for each frequency channel and it is equal to 3. That is, $\delta(f) = 3,\forall f\ f \in \{1, \ldots, 6\}$. Then, the cumulative delay $g(f)$ can be simplified as $g(f) = 2(6 - f)$. Algorithm \ref{algo:dedisp} is executed for each section of the dynamic spectrum. In this example, there are $n_t = 4$ top sweeps and $3(\delta(f) - 1) = 6$ side sweeps, two for each channel, in each section. Firstly, all top sweeps are computed. Three of these, $s(6, 2, 0)$, $s(6, 3, 0)$, and $s(6, 4, 0)$, are shown entering Section $D_{2,1}$ and are depicted with a blue circle, an indigo diamond, and an orange triangle pattern, respectively. The \texttt{compute\_partial\_sweep} routine follows their path through $D_{2,1}$, accumulating the associated intensity values, until they cross the section boundary. Resulting partial intensities are added to the corresponding entries in the time series array with indexes $T(6, 2, 0) = 2$,  $T(6, 3, 0) = 3$, and  $T(6, 4, 0) = 4$ . Sections whose time interval index is 1, $D_{1,1}$ and $D_{2,1}$ in this case, do not have side sweeps that can be completed. For instance, the red square sweep $s(4, 1, 1)$ traversing sections $D_{2,1}$, $D_{1,1}$, and $D_{1,2}$ entered the band at a time bin $T(4, 1, 1) = 1 - 1 - g(4) = -4$ not covered by $D$. Hence, it does not have an entry in the dedispersed time series array and its accumulated intensity is discarded. Actionable side sweeps are encountered when the algorithm processes section $D_{2,2}$. The indigo diamond sweep and the orange triangle one enter section $D_{2,2}$ as side sweeps of channel 6, and are now identified as $s(6, 5, 2)$  and $s(6, 5, 1)$. On the other hand, the circle sweep traverses channel 6 entirely in the time-adjacent section $D_{2,1}$ and enters section $D_{2,2}$ through channel 5 as $s(5, 5, 1)$. Their respective start times are $T(6, 5, 2) = 5 - 2 - g(6) = 3$,  $T(6, 5, 1) = 4$, and $T(5, 5, 1) = 2$. Intensity contributions of section $D_{2,2}$ to those sweeps are added to the corresponding array entries where contributions from previous sections are also found. The sweep length $L$ (Equation \ref{eq:sweep_length}) across the entire spectrum $D$ is defined as $L = g(1) + \delta(1) = 10 + 3 = 13$.  The circle, diamond, and triangle sweeps reach the bottom channel within the pictured time frame. The sweep $s(6, 13, 0)$, drawn with a green parallelogram pattern, does not, and requires additional future sections of the dynamic spectrum to accumulate the intensity values over its full path.}
	\label{fig:full_dyspec_repr}
\end{figure*}

Equation \ref{eq:sum_of_parts} states that the dedispersed time series can be reconstructed by iterating over a partition of the dynamic spectrum for as many times as the number of time bins. In a practical scenario one wants to iterate over the partition once, and for each section to establish its contributions to the final dedispersed time series. The latter approach allows the incremental generation and dedispersion of the dynamic spectrum in situations where processing its full extent is constrained by memory and compute capacity, or by the underlying data layout. Figure \ref{fig:full_dyspec_repr} provides a visual representation of a dynamic spectrum partitioned in sections, with example sweeps whose intensity is progressively accumulated as sections are produced. Similarly to the previous section, STRIDE is here introduced assuming a given DM and a single dynamic spectrum. The algorithm will be generalised to handle DM ranges and multiple pixels in a later section.

% Maybe give an overview of the algorithms presented in this section and the next? Also re - iterate that the generalisation to multiple DM and pixels is presented in the next section.

\begin{algorithm}
	\SetAlgoLined
	\KwIn{$D_{i,j}$, a section of the dynamic spectrum,  $i \in \{ 1, \ldots,  F/n_f \}$, $j \in \{ 1, \ldots,  T/n_t\}$.}
	\KwIn{$S$, a reference to the dedispersed time series array where partial contributions are placed in.}
	\KwResult{Contributions from $D_{i,j}$ to its top and side sweeps are added to $S$. }
	\vspace{0.5em}

   // compute all the top sweeps in section $D_{i,j}$. \\
	\For{$t$ \textbf{in} $\mathcal{T}_j$}{
		$t_s \gets T(in_f, t, 0)$\;
		\textbf{if} $t_s < 1$ \textbf{then} \textbf{continue}\;
		$S[t_s] \pluseq $ compute\_partial\_sweep($D_{i,j}$, $t$, $in_f$, $0$)\;
	}
	\vspace{0.5em}
	// now compute all the side sweeps. \\
	\For{$f$ \textbf{in} $\mathcal{F}_i$}{
		\For{$d \gets 1$ \textbf{to} $\delta(f) - 1$}{
			$t_s \gets T(f, (j-1)n_t + 1, d)$\;
			\textbf{if} $t_s < 1$ \textbf{then} \textbf{continue}\;
			$S[t_s] \pluseq $ compute\_partial\_sweep($D_{i,j}$,  $(j-1)n_t + 1$, $f$, $d$)\;
		}
	}
	\caption{The \texttt{partial\_dedispersion} routine. Given a section $D_{i, j}$ of the dynamic spectrum, the procedure iterates over all the crossing sweeps to determine contributions of the section to the final dedispersed time series, stored in $S$.}
	\label{algo:dedisp}
\end{algorithm}

Algorithm \ref{algo:dedisp}:  \texttt{partial\_dedispersion} takes as input a section $D_{i,j}$ of the dynamic spectrum and executes brute-force dedispersion to calculate contributions of the section to all the sweeps crossing it. These are all represented in $A_{i,j}\cup B_{i,j}$, the set containing all top and side sweeps. The process consists of two phases. During the first one, contributions to the intensity of top sweeps are computed. The algorithm does so by iterating over all the time bins when such sweeps enter the top channel of the section, and accumulates values along the sweep path encompassed by $D_{i,j}$. In the second phase, an outer loop over the frequency channels of the section is paired with an inner loop over all admissible values of $d$, from here on referred to as offsets, to handle side sweeps.

The time at which a top or side sweep enters the frequency band is computed before the accumulation of intensity values takes place. The sweep is ignored if said time is prior to the start time of the dedispersed series. Otherwise, the supporting function \texttt{compute\_partial\_sweep} is invoked to sum intensity values across the sweep path defined by the originating time bin, frequency channel, and offset. The result is added to the corresponding entry in the output one-dimensional array $S$, which stores the dedispersed time series.

The computation of a partial sweep is illustrated in Algorithm \ref{algo:partial_sweep}: \texttt{compute\_partial\_sweep}. Given a start time bin, frequency channel, and an initial offset, the procedure traces the sweep path by iterating over frequency channels in descending order, and relevant time bins within each of those, to accumulate the associated values in $D_{i,j}$. The algorithm terminates when the sweep crosses a boundary of the section, by moving past either the bottom frequency channel or the last time bin.

The start frequency channel $f_{top}$ is treated as a special case for the \texttt{compute\_partial\_sweep} function to handle both top and side sweeps. The number of time bins necessary for a sweep to cross $f_{top}$ within the current section depends on the associated offset $d_{start}$, the number of time bins already spent in $f_{top}$ prior to $t_{start}$. The offset is always $0$ for top sweeps, and varies between $1$ and $\delta(f) - 1$ for side sweeps.

The time bin $t$ at which the sweep enters frequency channel $f$, $f < f_{top}$, is equal to $t_{start}$ plus the number of time bins it took the sweep to cross the frequency band between $f_{top}$, included, and $f$, excluded. From the resulting value is subtracted $d_{start}$, the number of time bins spent in $f_{top}$ in the time-adjacent section $D_{i, j -1}$. The intra-channel dispersive delay $d_{max}$ is also adjusted accordingly. Finally, for each frequency channel, a for loop accumulates a number of consecutive intensity values equal to the dispersive delay within that channel.

% TODO Discussion on iteratively computing the solution, and inner parallelism?? (maybe this goes later in implementation details)

\begin{algorithm}
	\SetAlgoLined
	\KwIn{$D_{i,j}$, a section of the dynamic spectrum, $i \in \{ 1, \ldots,  F/n_f \}$, $j \in \{ 1, \ldots,  T/n_t\}$.}
	\KwIn{$t_{start}$, the start time bin of the partial sweep.}
	\KwIn{$f_{top}$, the start frequency channel of the partial sweep.}
	\KwIn{$d_{start}$, offset within $f_{top}$ the sweep starts at.}
	\KwOut{$s$, the intensity accumulated by the partial sweep.}
		\vspace{0.5em}
	$s \gets 0$\;
	\For{$f \gets f_{top}$ \textbf{down to} $(i - 1)n_f + 1$ }{
		\uIf{$f = f_{top}$}{
			$d_{max} \gets \delta(f) - d_{start}$\;
			$t \gets  t_{start}$\;
		}\Else{
			$d_{max} \gets  \delta(f)$\;
			$t \gets  t_{start} + (g(f) - g(f_{top})) - d_{start}$\;
		}
		\For{$d \gets 0$ \textbf{to} $d_{max} - 1$}{
			\textbf{if} $t + d > jn_t$ \textbf{then} \textbf{return} s\; 
			$s \pluseq D_{i,j}(f, t + d)$\;
			
		}
	}
	\Return{$s$}\;
	\caption{\texttt{compute\_partial\_sweep}. Compute a partial sweep across section $D_{i, j}$ starting from time bin $t_{start}$ and frequency channel $f_{top}$.}
	\label{algo:partial_sweep}
\end{algorithm}

\section{Ring buffer strategy} \label{sec:ring_buffer}
% considering such an array is allocated for every DM trial and for every pixel.
Algorithms \ref{algo:dedisp} and \ref{algo:partial_sweep} store the dedispersed time series in a one-dimensional output array, $S$. Its size $N$ is limited by the amount of available computer memory. What follows is an analysis to determine a suitable value for $N$.

Let $L$ be the time required for the sweep to cross the entire band, in units of time bins. Referred to as sweep length, it is defined as

\begin{equation}\label{eq:sweep_length}
	L = g(1) + \delta(1).
\end{equation}

For the time series value $S[t]$ to be the accumulated intensity of a sweep starting at time bin $t$ and crossing the entire spectrum, the algorithm must have already processed all the sections  covering the time bins $\{t, \ldots, t + L -1\}$. If $N < L$, no element of $S$ represents a complete sweep. In the case $N = L$, only $S[1]$ is a usable value. In general, $N$ must be defined as

\begin{equation}\label{eq:buffer_length}
	N = B + L,
\end{equation}

extending $S$ with an additional number $B$ of memory slots. In doing so, the first $B + 1$ elements of $S$ are total intensity values ready to be searched for peaks. The remaining $L - 1$ elements require contributions from future sections.

As is, the algorithm cannot process further time bins to complete the last entries of the time series because new sweeps starting at those times do not have corresponding slots in $S$. However, if the first $B + 1$ elements of the array were consumed in a downstream processing task, such as a peak finding routine, then those memory locations could be repurposed to collect the total intensity of new sweeps. The array $S$ becomes a ring buffer with $N$ slots. Oldest total intensity elements are consumed by the peak finding algorithm as new entries are produced by the dedispersion algorithm. 

% TODO Furthermore, $S$ is guaranteed to have enough space available for \texttt{partial\_dedispersion} to e a section  by imposing the additional requirement $B \ge n_t$, 

\begin{algorithm}
	\SetAlgoLined
	\KwIn{$\{D_{i,j}\ :\ i \in \{ 1, \ldots,  F/n_f \}\ \land\ j \in \{ 1, \ldots,  T/n_t\} \}$, a partitioned dynamic spectrum.}
	\KwIn{$B$, $B \geq n_t$, the number of additional memory slots.}
	\KwResult{Generation and processing of dedispersed time series.}
		\vspace{0.5em}
	$N \gets L + B$\;
	$S[t] \gets 0,\ \forall t\ t \in \{1,\ldots, N\}$\;
	$r_s \gets 1$; // earliest position of the buffer. \\
	$r_c \gets 0$; // number of consumed slots. \\
	\For{$j \gets 1$ \textbf{to} $T/n_t$}{
		// if the buffer is full, then process earliest elements. \\
		\If{$r_c + n_t  > N$}{
			// h is the number complete sweeps.\\
			$h \gets r_c - L + 1$\;
			peak\_finding($S$, $r_s$,  $h$)\;
			clear\_buffer($S$, $r_s$, $h$)\;
			$r_c \gets r_c - h$\;
			$r_s \gets (r_s - 1 + h) \bmod N + 1$\; 
		}
		\For{$i \gets 1$ \textbf{to} $F/n_f$}{
			partial\_dedispersion($D_{i, j}$, $S$, $r_s$, $r_c$)\;
		}
		$r_c \gets r_c + n_t$\;
	}
	// process leftover entries, if any.\\
	\If{$r_c \geq L$}{
		$h \gets r_c - L + 1$\;
		peak\_finding($S$, $r_s$,  $h$)\;
	}
	\caption{\texttt{transient\_search}. An incremental dedispersion and  transient search pipeline.}
	\label{algo:ring_buffer_algo}
\end{algorithm}

%TODO: add a note on B having to be larger than time batches , but not necessary a multiple

%TODO add an image explaining the process

Algorithm \ref{algo:ring_buffer_algo}: \texttt{transient\_search} uses partial dedispersion in conjunction with a ring buffer and a peak finding routine to detect transients within a partitioned dynamic spectrum. Figure \ref{fig:ring_buffer} provides a visualisation of the algorithm execution on a simple, concrete example. Initially, all the $N$ slots of $S$ are initialised to zero. The start position of the buffer is held in $r_s$, and the number of utilised slots is recorded in $r_c$.

The computation is driven by an outer loop over the partitioned time axis of the dynamic spectrum, such that $n_t$ time bins are processed at each iteration. The requirement on $B$, $B \geq n_t$, ensures $S$ has enough free slots to hold new sweeps. An inner loop executes the dedispersion algorithm on sections covering the current time interval and, collectively, the entire frequency band. By the end of an outer loop, the dedispersion algorithm will have populated the slots $S[r_c + 1], \ldots, S[r_c + n_t]$ with contributions from the currently considered sections to new sweeps starting at the corresponding time bins. 

Furthermore, there are sweeps that entered the frequency band at past time bins $t_s$, $t_s \leq (j - 1)n_t$, and are now intersecting sections processed at the current outer loop iteration. When $r_c > 0$, the algorithm places contributions to those into the corresponding slots, if any, whose index ranges in  $\{r_s, \ldots,  r_s + r_c - 1\}$. During the first iterations of the pipeline there may be sweeps intersecting the current sections and whose start time is less than what is associated with the earliest slot $S[r_s]$. For instance, all the side sweeps of sections $D_{i,1},\ \forall i$, generated before the first time bin of the dynamic spectrum. Those, marked in red in Figure \ref{fig:timing1}, have no associated slots and are discarded.

Dedispersion runs until $S$ is not able to hold further $n_t$ new sweeps. There are $r_c$, $r_c > L$, slots with accumulated contributions to the corresponding sweeps. The last $L - 1$ of these are still waiting for contributions from future time bins because the sweep length is $L$. The earlier $h = r_c - L + 1$ slots have accumulated all contributions and are ready for processing. The peak finding algorithm searches for peaks in those, then that section of the ring buffer is cleared by assigning zeros to its slots. The start position $r_s$ is moved $h$ slots forward and the count $r_c$ is reduced by the same amount.

From this moment onwards, the dedispersion and the peak finding routines alternate in cycle. The first one processes more time bins and fills up $S$, the latter consumes the $h$ oldest entries therein. Side sweeps that started at earlier cycles are never discarded. Because their maximum length is $L$ and at least one time bin is spent in the current time interval $\mathcal{T}_j$, contribution to side sweeps fall within the first  $L - 1$ slots in $S$, which were not cleared at the end of the past cycle.

%TODO read here https://iopscience.iop.org/article/10.3847/1538-4357/add014#apjadd014fn6

% TODO: consider replacing slots with cell in relation to the ring buffer

\begin{figure*}
	\centering
	\begin{subfigure}[t]{0.45\textwidth}
		\centering
		\includegraphics[width=\linewidth]{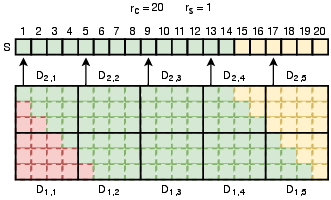} 
		\caption{\textbf{End of cycle 1}. Sections related to the first 5 time bin intervals, encompassing time bins 1 to 20, are fed to the \texttt{partial\_dedispersion} algorithm. The figure highlights contributions from selected top sweeps by means of arrows. These generate from the top edge of the sections, at time bins when the edge is crossed by the associated sweeps, and point to the respective buffer slots. All side sweeps from sections $D_{1,1}$ and $D_{2,1}$ are discarded as their start times predate the first time bin. Elements of the dynamic spectrum that do not belong to any valid sweep are marked in red. All the slots of $S$, with the oldest being $r_s = 1$, received a contribution because the number of free slots at the start of the cycle is an integer multiple of $n_t = 4$. Hence, the buffer becomes full at $r_c = 20$. The last 6 positions of the buffer $S$ are associated with incomplete sweeps, which dominate sections $D_{1,5}$ and $D_{2,5}$. The first $h = 20 - 7 + 1 = 14$ elements of $S$ are passed to the peak finding routine. Then, that portion of the buffer is cleared to make space for the next dedispersion cycle. The count of used slots decreases to $r_c = 20 - h = 6$. The start index $r_s$ of the buffer moves from position 1 to 15, the first incomplete sweep, that now becomes the oldest entry in the buffer.} \label{fig:timing1}
	\end{subfigure}
	\hfill
	\begin{subfigure}[t]{0.45\textwidth}
		\centering
		\includegraphics[width=\linewidth]{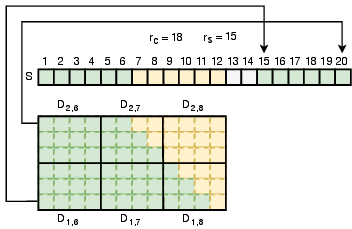} 
		\caption{\textbf{End of cycle 2}. The second cycle of dedispersion begins with 14 slots of the buffer available to host new sweeps, indexed form 1 to 14. Three more sections are generated and processed before the buffer cannot handle any further time bin intervals. Positions 13 and 14 remains unused: one would have to choose $B$ such that $(N - L + 1) \mod n_t = 0$ for the buffer to be completely utilised at every cycle. Slots 15 to 20, the oldest 6 of $S$ starting from $r_s$, receive all their missing contributions from side sweeps starting at sections $D_{1,6}$ and $D_{2,6}$. These are the same sweeps marked in yellow in Figure \ref{fig:timing1}. Slots 1 to 12 collect contributions to sweeps started during this cycle. When S becomes full again, slots 15 to 6 are searched for peaks, then cleared. Position 7 becomes the oldest used slot.} \label{fig:timing2}
	\end{subfigure}
	
	\vspace{1cm}
	\begin{subfigure}[t]{\textwidth}
		\centering
		\includegraphics[width=0.45\linewidth]{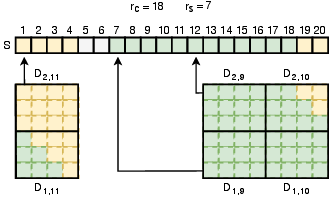} 
		\caption{\textbf{End of cycle 3}. The buffer state falls in the same pattern at the end of all cycles after the first one. Elements $r_s$ to $((r_s + h - 1) \mod N)$ are complete sweeps, the first $L - 1$ being the continuation of sweeps started during the previous cycle. Elements $((r_s + h)\mod N)$ to $((r_s + h + L - 2) \mod N)$ are incomplete sweeps, and the last two slots are unused.} \label{fig:timing3}
	\end{subfigure}
	\caption{\textbf{Example execution of the dedispersion algorithm using a ring buffer.} Depicted in this figure is the state of algorithm at the end of the first three dedispersion cycles. A cycle ends when the buffer $S$ becomes full and ready for downstream processing. During a dedispersion cycle, the \texttt{transient\_search} algorithm iterates over sections of a dynamic spectrum as they are produced by an upstream process. In this example there are 4 time bins in a time interval covered by a section. The full bandwidth is divided into 6 frequency channels, 3 per section. To ease the illustration of the example, the dispersive delay for each channel is chosen to be $2$ time bins, resulting into a sweep length $L= 7$. The number $B$ of additional slots in $S$ is arbitrarily set to 13. The resulting size of $S$  is $N = L + B = 20$. Slot indexes are displayed on top of the array. Every time the ring buffer becomes full, the first $h = r_c - L + 1$ elements starting from position $r_s$, coloured in green, form a short time series that is ready to be searched for transient signals.  Sections contributing to the dedispersed time series are shown below the ring buffer, aligned to $S$ according to the time bins they represent. A green cell in the ring buffer stores the accumulated intensity value of a complete sweep. The path of such sweeps is also coloured in green throughout the crossed sections. The yellow colour is associated with incomplete sweeps that occupy the latest entries of the buffer. Gray cells are not associated to any sweep due to $n_t$ not being always a multiple of $N - r_c$ at the beginning of a cycle.}
	\label{fig:ring_buffer}
\end{figure*}

% Ok, what do you want to say here?
% The figure shows three different stages or batch processing
% Well, define what a dedispersion cycle is. That is, when the buffer gets full.
% So, the figure shows the state of the algorithm and buffer at the end of three consecutive dedispersion cycles, starting from the very beginning when the buffer is empty. 

\section{Generalisation to multiple pixels and DMs}\label{sec:generalisation}

Only a single dynamic spectrum, partitioned in sections, is required to define the algorithm. It is the degenerate scenario of one-pixel images. Realistically, the number of pixels varies from tens of thousands to millions, and there are just as many dynamic spectra. The quantity of images that can be held in memory limits the number of frequency and time bins simultaneously available during processing. Then, the discretized observing frequency band $\mathcal{F}$ and time $\mathcal{T}$ is divided in manageable intervals $\mathcal{F}_i$  and $\mathcal{T}_j$ of size $n_f$ and $n_t$, respectively. A whole observation can be now imaged and dedispersed incrementally using STRIDE.

An image set $I_{i,j}$ contains all images corresponding to frequency channels in $\mathcal{F}_i$ and time bins in $\mathcal{T}_j$.  It encodes sections $D^{x, y}_{i, j}$, 
\begin{equation*}
(x, y) \in \{1, \ldots, X\} \times \{1, \ldots, Y\},
\end{equation*}

of the same time and frequency intervals within $XY$ different dynamic spectra. Given an image set $I_{i,j}$, the set of sweeps to be computed is the same for every pixel $(x, y)$ because the corresponding sections share the same frequency channels and time bins.  Algorithm \ref{algo:dedisp} is modified to enclose the two calls to \texttt{compute\_partial\_sweep} within for loops over all pixels. An example of such modification is shown in Algorithm \ref{algo:dedisp_multipix}. The output array $S$ is extended with enough memory to hold time series for every $(x, y)$ pair.

\begin{algorithm}
	\For{$t$ \textbf{in} $\mathcal{T}_j$}{
		$t_s \gets T(in_f, t, 0)$\;
		\textbf{if} $t_s < 1$ \textbf{then} \textbf{continue}\;
		\For{$(x, y)$ \textbf{in} $\{1, \ldots, X\} \times \{1, \ldots, Y\}$}{
			$S[x, y, t_s] \pluseq $ compute\_partial\_sweep($D^{x, y}_{i,j}$, $t$, $in_f$, $0$)\;
		}
	}
	\caption{An additional for loop is added to account for multiple pixels when computing top sweeps. The same is done for side sweeps, not illustrated here for brevity.}
	\label{algo:dedisp_multipix}
\end{algorithm}

% Generalisation to multiple DM in a section

The additional notation and algorithmic logic required to handle an arbitrary number $M$ of DM trials is a natural and minimal extension of the mathematical formalism and algorithm listings introduced in earlier sections. The delay $\tau$ and delay-dependent functions $\delta$, $g$, and $T$ become also a function of DM. A sweep is now identified with a 4-element tuple $(DM, f, t, d)$, and so on. Because DM trials are independent of one another, there are just as many resulting dedispersed time series the memory allocation for $S$ must account for. Algorithm \ref{algo:dedisp} is extended with an additional outmost for loop over DM trials whose values determine sweep paths across the section. Ring buffer slots where contributions to those sweeps are placed in are also determined by the associated DM values.

Multiple DM values imply varying sweep lengths. In this case, $L$ is chosen to be the maximum sweep length resulting from the largest DM. If the first $h$ longest sweeps in the buffer completed in the current iteration, then the shorter ones must have too. In fact, more may be ready for analysis. However, time series associated to different DMs are still processed at the same length $h$ to allow for a practical implementation. Enough memory must be allocated to the output array for it to also hold a de-dispersed time series for each DM trial. The final dimensions of $S$ are $N \times X \times Y \times M$.

\section{Parallelisation strategy} \label{sec:parallelisation}

Brute force dedispersion is computationally expensive, characterised by a time complexity of $O(TFM)$. The cost grows by orders of magnitude in imaging-based dedispersion where the algorithm is run for hundred thousands of pixels. Furthermore, data movement between computer memory and the processor becomes a significant contribution to the total runtime. An implementation of the algorithm must optimise memory access patterns and adopt compute parallelisation strategies to make the approach feasible in practice.

The algorithm can be parallelised at multiple abstraction levels. The solution presented in this section is the one adopted to extend the BLINK pipeline \citep{DiPietrantonio2025, GPUImager} with dedispersion and peak finding capabilities.  At the highest level, all image sets belonging to the same time interval can be dedispersed at the same time. It is achieved by distributing among different CPU threads the iterations of the inner for loop over frequency channel intervals of Algorithm \ref{algo:ring_buffer_algo}. Each CPU thread is assigned a GPU to dedisperse the associated image sets. The GPU kernel implementing Algorithm \ref{algo:dedisp} distributes pixels over GPU threads, whereas the DM trials, and top and side sweeps are iterated over serially by each thread. This choice reduces the complexity of the GPU implementation, guarantees a sufficient degree of parallelism, and optimises memory access and cache utilisation.

The input data layout is such that intensity information of the same time bin and frequency channel across all sections are closely packed in memory. A GPU thread warp, which is a set of GPU threads executing the same instructions in lock step, exploits data locality by accessing neighbouring pixels while integrating over the same sweep path for all directions in the sky. In doing so, memory accesses are combined in high-bandwidth memory transactions to L2 cache and GPU global memory.

The output buffer $S$ must be accessible by all GPUs for them to store their contributions to the total sweep intensities. Hence, it makes use of managed memory whereby the allocation resides on the main memory of the system while remaining accessible to threads of all GPUs. Atomic operations are required because multiple GPUs may write to the same memory address.

\section{Experimental results} \label{sec:results}

The STRIDE algorithm is designed to overcome the challenges of transient searches within low-frequency archival observations of the MWA Voltage Capture System (VCS; \cite{Tremblay2015}). A GPU-accelerated implementation of the algorithm extends the BLINK imaging pipeline with transient detection capabilities. The implementation of STRIDE, which closely resembles the listings in this work, is publicly available on GitHub\footnote{\url{https://github.com/PaCER-BLINK-Project/dedispersion}}.

\begin{table}
	\centering
	\begin{tabular}{|l|l|}
		\toprule
		\textbf{Property} & \textbf{Value} \\
		\bottomrule
		Number of tiles & 128 \\\hline
		Central frequency & 154.25\,MHz \\\hline
		Bandwidth & 30.72\,MHz \\\hline
		Number of channels & 768 \\\hline 
		Channel resolution & 40\,kHz \\\hline 
		Integration time & 20\,ms \\\hline
		Image size & 1024$^2$\,px \\\hline
		Pixel size & 0.007$^2$\,deg$^2$ \\\hline
	    DM & 50 to 60\,pc\,cm$^{-3}$ \\\hline
		Maximum dispersive delay & 4.25\,s \\\hline
		Number of images & 163,200 \\\hline
		Image volume & 684.5 GB \\\hline
	\end{tabular}
	
	\caption{\textbf{BLINK imaging and transient search parameter specification for the experiment.} The high time and frequency resolution, together with the dispersive delay, result in more than hundred thousand images to be considered at any given time to survey the imaged FoV for transient events. The memory demand to hold all of them cannot be met by the current generation of GPU hardware.}
	\label{tab:crab_requirements}
\end{table}

A 20-minute MWA VCS Phase 2 Extended observation whose FoV encompasses the position of B0531+21, also known as the Crab pulsar, was chosen as a test case. The goal is to correctly dedisperse and detect Crab pulses. Voltage data are stored in 24 files for each second of observation. Each file corresponds to a 1.28\,MHz coarse channel, then fine-channelised to 32, 40\,kHz channels by the BLINK correlator. The MWA Extended configuration comprises long baselines of up to 5\,km, ensuring positional accuracy of sources through high angular resolution and good quality images to facilitate software validation. Table \ref{tab:crab_requirements} summarises the imaging parameters for a BLINK transient search in the chosen observation. The time resolution selected for the experiment is 20\,ms, resulting in 50 time bins per observing second. Hence, a natural choice is to map input files to image sets of size $n_f = 32$, $n_t = 50$.

The number of images required to cover the entire 30.72\,MHz band and the time span of the 4.25\,s dispersive delay results in a data volume of 684.5\,GB. Running the BLINK pipeline on the entire observation generates 46 million images. However, the software is not constrained by these numbers because it images and dedisperses one second and one coarse channel of observation at a time. The streaming nature of the STRIDE algorithm and the adoption of a ring buffer reduce the minimum memory requirement by 97.9\%, down to 14.4\,GB, occupied by the ring buffer to support the dedispersion process. The figure accounts for the maximum dispersive delay and 2\,s worth of additional memory slots for each of the time series in the ring buffer. That is, $L = \lceil4.25\,s / 0.02\,s \rceil = 213$ and  $B = 2\,s / 0.02\,s  = 100$.

% Discuss the execution environment 
The observation was split in two 10-minute slices and independently searched for radio pulses using the BLINK transient search pipeline. The DM trial values are between 50 and 60\,pc\,cm$^{-3}$, in steps of 1\,pc\,cm$^{-3}$. The BLINK pipeline was run on 2 GPU nodes equipped with 4 AMD MI250X GPUs each, part of the Setonix supercomputer \citep{setonix} hosted at the Pawsey Supercomputing Research Centre\footnote{https://pawsey.org.au} in Perth. The total execution time across the two nodes was 24.7 hours. Dividing the figure by $1024^2$, the number of pixels, gives a normalised runtime of 0.085\,s per pixel. The compute hardware recorded an energy consumption of 27.4\,kWh.  

\begin{figure}
\includegraphics[width=\linewidth]{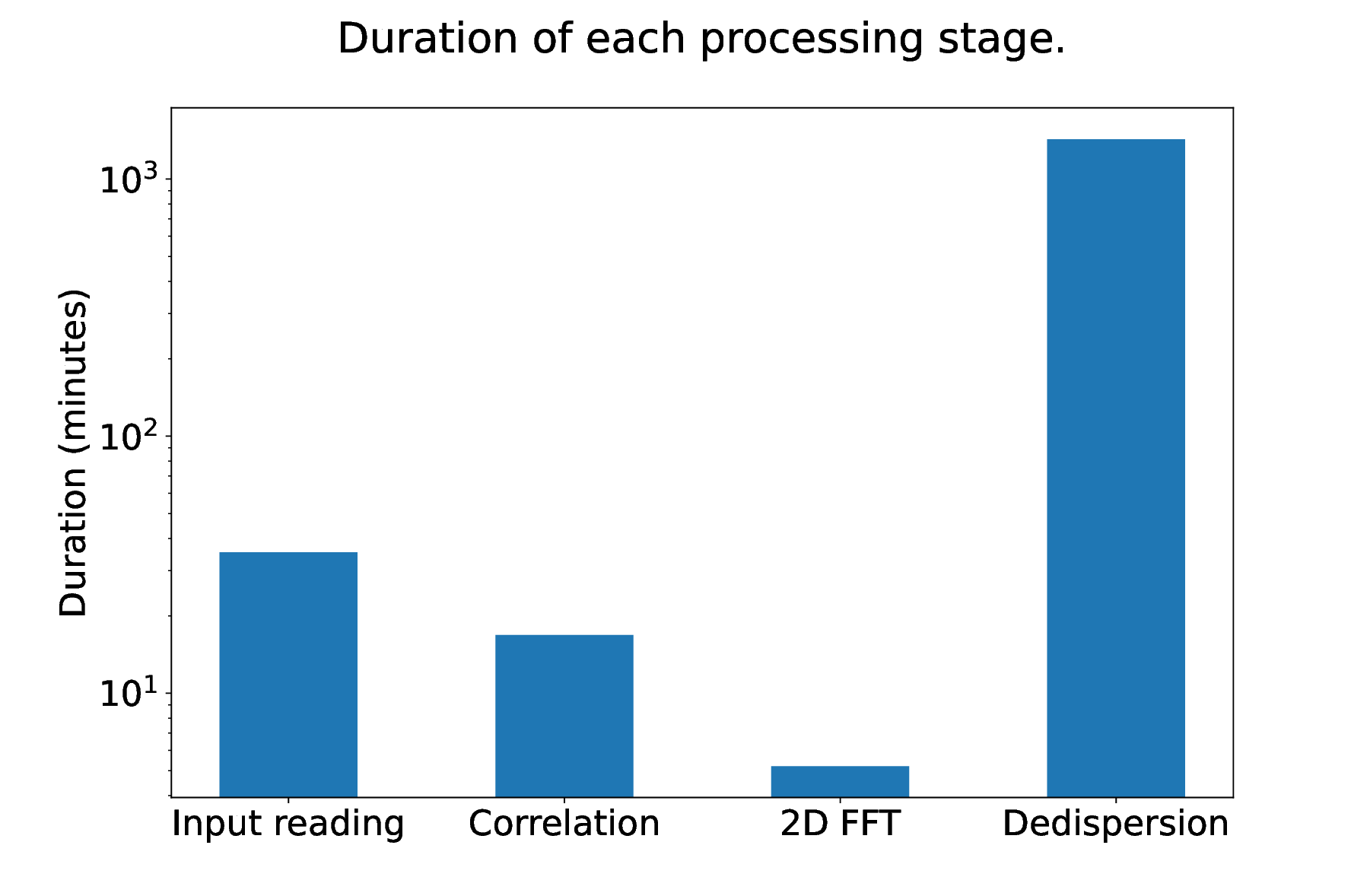}
\caption{\textbf{Execution times of the BLINK processing stages.} The four most expensive computational steps are reading the voltage files, cross correlation of voltage time series, inverse 2D Fourier transform of visibilities into sky images, and dedispersion through the STRIDE algorithm. }
\label{fig:processing_time}
\end{figure}

Figure \ref{fig:processing_time} reports the execution times of the main computational steps of the BLINK pipeline, in logarithmic scale. The dedispersion stage accounts for 96.1\% of the total runtime, or 23.8 hours, and it is three orders of magnitude more expensive than the cumulative execution time of the 2D FFT step. The latter sits at 0.35\%, equal to 312 seconds, as the least time consuming operation. The input ingestion accounts for 2.37\%, 35.5 minutes, of the BLINK runtime due to data residing on the SSD-based fast storage of the \texttt{/scratch} filesystem. Finally, the correlation kernel takes 16.8 minutes, making up the remaining 1.38\%.

The BLINK pipeline produced 138,786 raw candidates with a signal-to-noise (SNR) ratio above the threshold of 7. These were then clustered by $(x, y)$ coordinates, DM, and time bin to identify a single pulse event. The same pulse can in fact be detected in multiple adjacent pixels, at several DMs around the true one, and at consecutive time bins. A second clustering operation groups together events at close DM and location to identify the associated source. The hierarchical clustering process revealed two sources at DM values of 57 and 51\,pc\,cm$^{-3}$. These were detected at multiple positions within the FoV. For each source, the cluster containing the brightest detections gives a first-order estimation of its true position, whereas the others are regarded as side lobe detections.

The source at DM 57 \,pc\,cm$^{-3}$ records 358 events at location RA = 05:34:31.13, DEC = +21:53:26.5. The recorded maximum SNR is 125. The Crab pulsar is listed to be 0.12 degrees away with a DM of 56.77\,pc\,cm$^{-3}$ on the ATNF Pulsar Catalogue\footnote{\url{https://www.atnf.csiro.au/research/pulsar/psrcat/}}. The second source detected with a DM 51, located at RA = 05:28:51.22, DEC = +21:51:06.0, counts 78 events with a maximum SNR of 39. The closest known source is B0525+21 at a distance of 0.15 degrees away and characterised by a DM of 50.86\,pc\,cm$^{-3}$.

Figure \ref{fig:main_findings} visually summarises the main findings of the experiment. The dirty image of the FoV reported in Figure \ref{fig:crab_image} shows the Crab nebula as the brightest continuum radio source present in the observation. The image also contains several side-lobe detections of the nebula distributed radially around the real source. Figure \ref{fig:detection_map} is a detection map where each candidate is shown as a dot that encodes its position within the FoV, and its signal-to-noise and DM value through the dot size and colour, respectively. To keep the figure readable, only candidates with SNR above 10 are shown. They are 49,346, or 35.6\% of the total. Most of these overlap with the true source and side lobe artifacts of the Crab nebula. The dot colours correspond to DM values clustered around 57\,pc\,cm$^{-3}$, another confirmation these are pulses associated with B0531+21. A distinct cluster on the top-left of the map highlights the second source at a well-defined DM of 51\,pc\,cm$^{-3}$. Figure \ref{fig:crab_time_series} and \ref{fig:b0525_time_series} show the dedispersed time series of the brightest pulse of each detected source. The dynamic spectrum of the Crab pulse with strongest SNR, computed in post-processing, is shown in Figure \ref{fig:crab_ds}.

% Discuss what we save as output.

\begin{figure*}
	\centering
	\begin{subfigure}[t]{0.45\textwidth}
		\centering
		\includegraphics[width=0.80\linewidth]{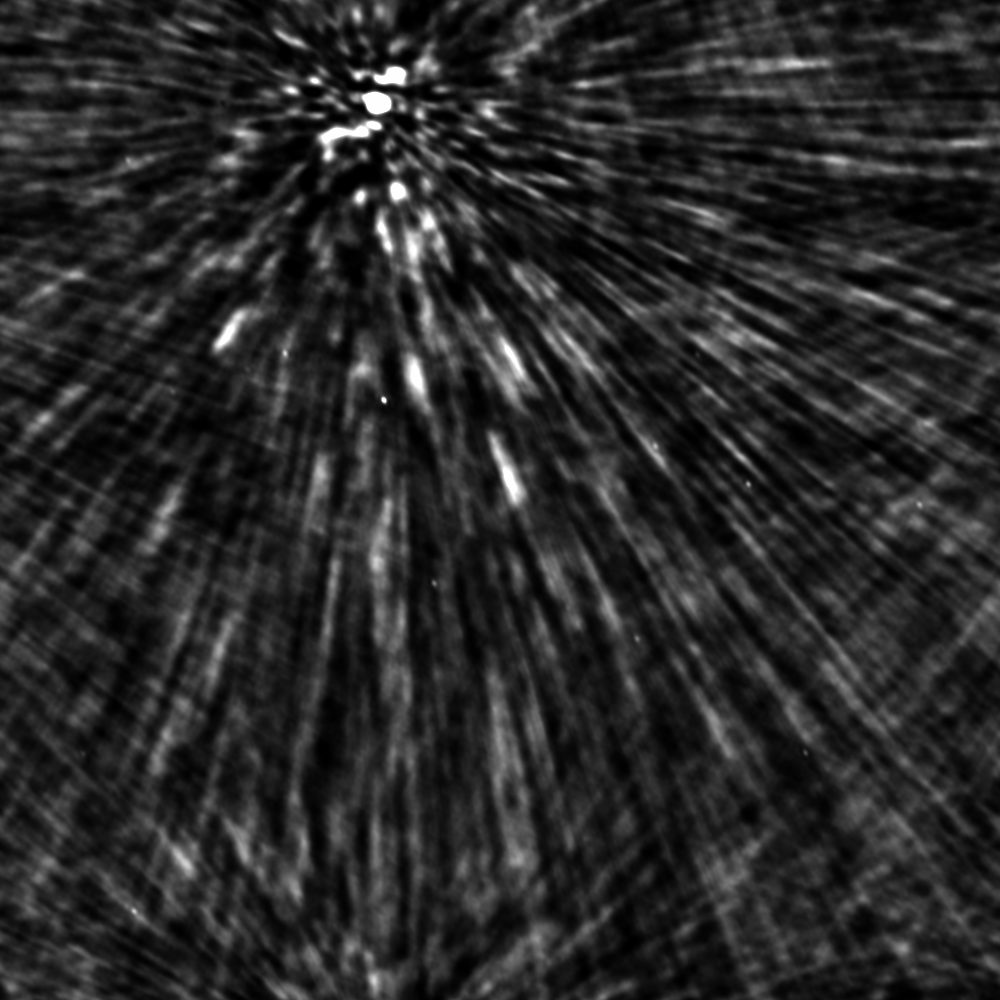} 
		\caption{A dirty image of the FoV reveals the Crab nebula as a bright persistent radio source towards the upper edge of the image. The other bright spots leading up to it are the effect of the side lobes of the primary beam that are sensitive to the nebula due to its strong brightness. Point-like sources can also be seen spread throughout the image.}
		\label{fig:crab_image}
	\end{subfigure}
	\hfill
	\begin{subfigure}[t]{0.50\textwidth}
		\centering
		\includegraphics[width=\linewidth]{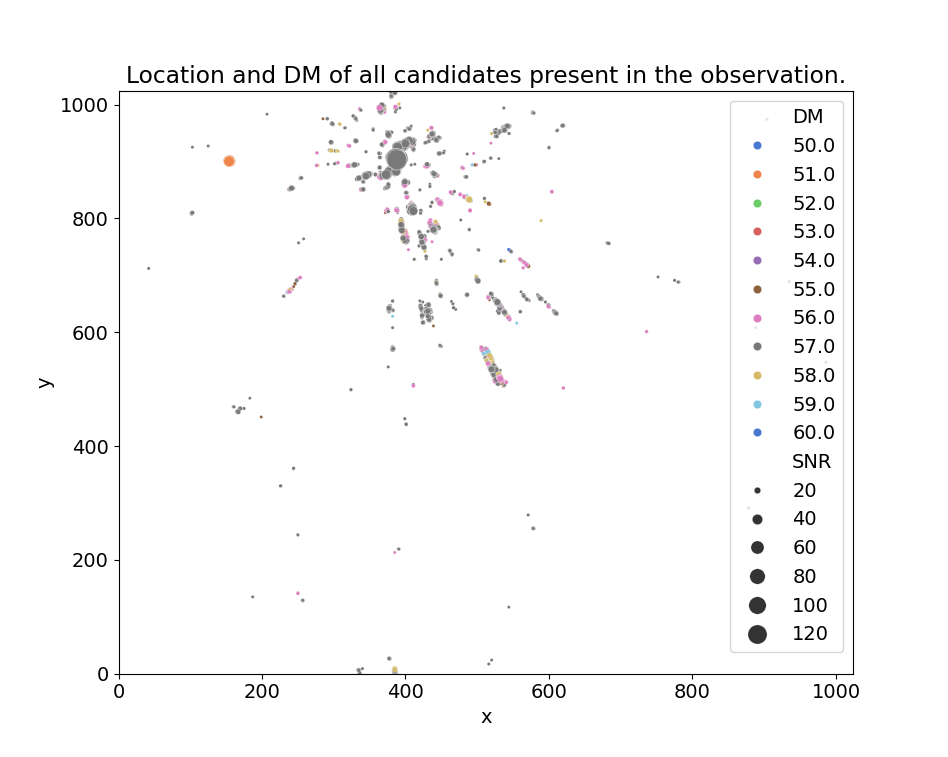} 
		\caption{Candidates with an SNR value above 10 are displayed in a scatter plot, referred to as a detection map. For each candidate, its $(x, y)$ coordinates correspond to its position in the image, the dot size is a function of the SNR, and the colour encodes the DM value. Two main clusters are present in the map. The gray-coloured cluster characterised by a DM of 57 and an SNR of around 120 matches the position of the Crab nebula, as shown in Figure \ref{fig:crab_image}. Smaller clusters distributed around it are side lobe detections of the same source, also at nearby DMs. A distinct orange cluster to the far left of the figure highlights another set of detections at DM 51 with an SNR of around 40. The positions of the clusters correspond to pulsars B0531+21 and B0525+21, respectively.}
		\label{fig:detection_map}
	\end{subfigure}

\begin{subfigure}[t]{0.48\textwidth}
	\centering
	\includegraphics[width=\linewidth]{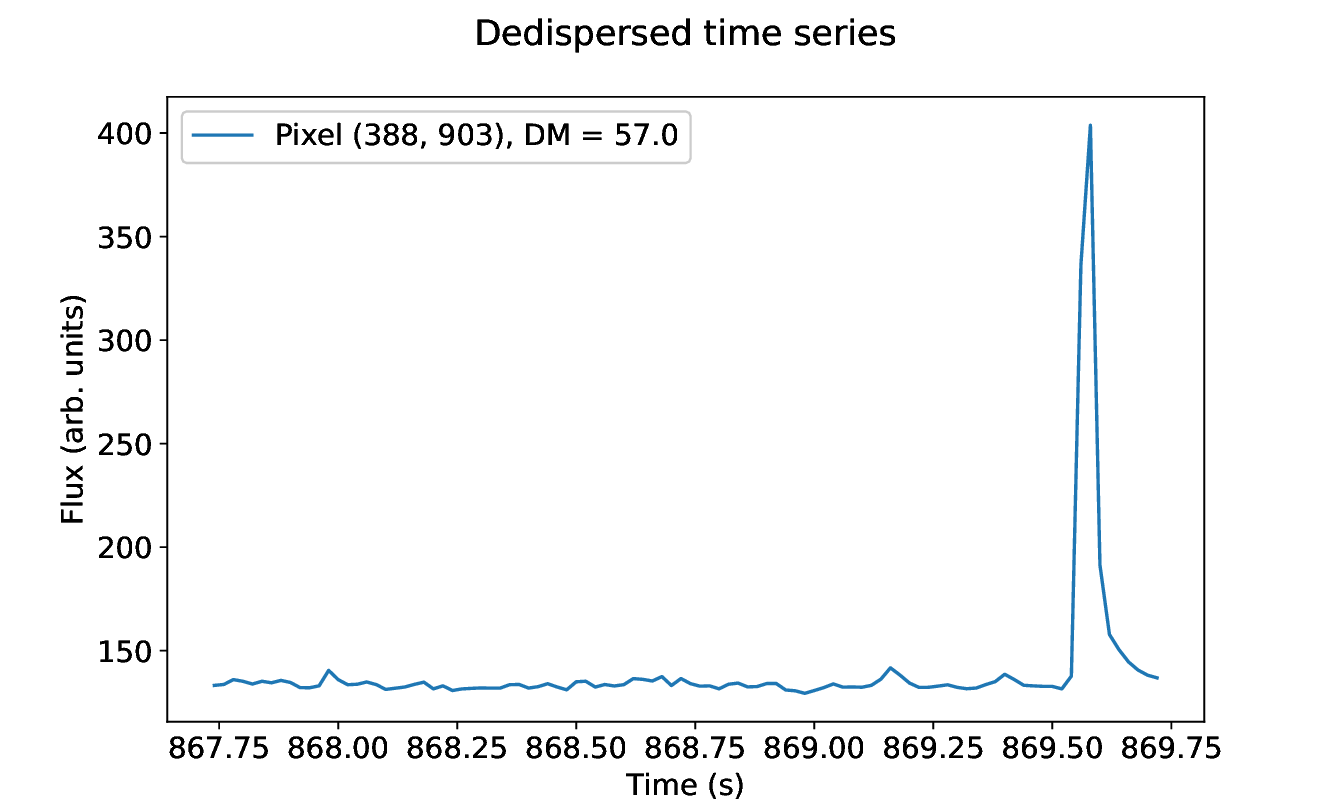} 
	\caption{The brightest detection of B0531+21 with SNR equal to 125. The same pulse is shown in Figure \ref{fig:crab_ds}, dispersed in time while crossing the observed frequency band. The median value of the time series reflects the brightness of the host nebula.}
	\label{fig:crab_time_series}
\end{subfigure}
	\hfill
\begin{subfigure}[t]{0.48\textwidth}
	\centering
	\includegraphics[width=\linewidth]{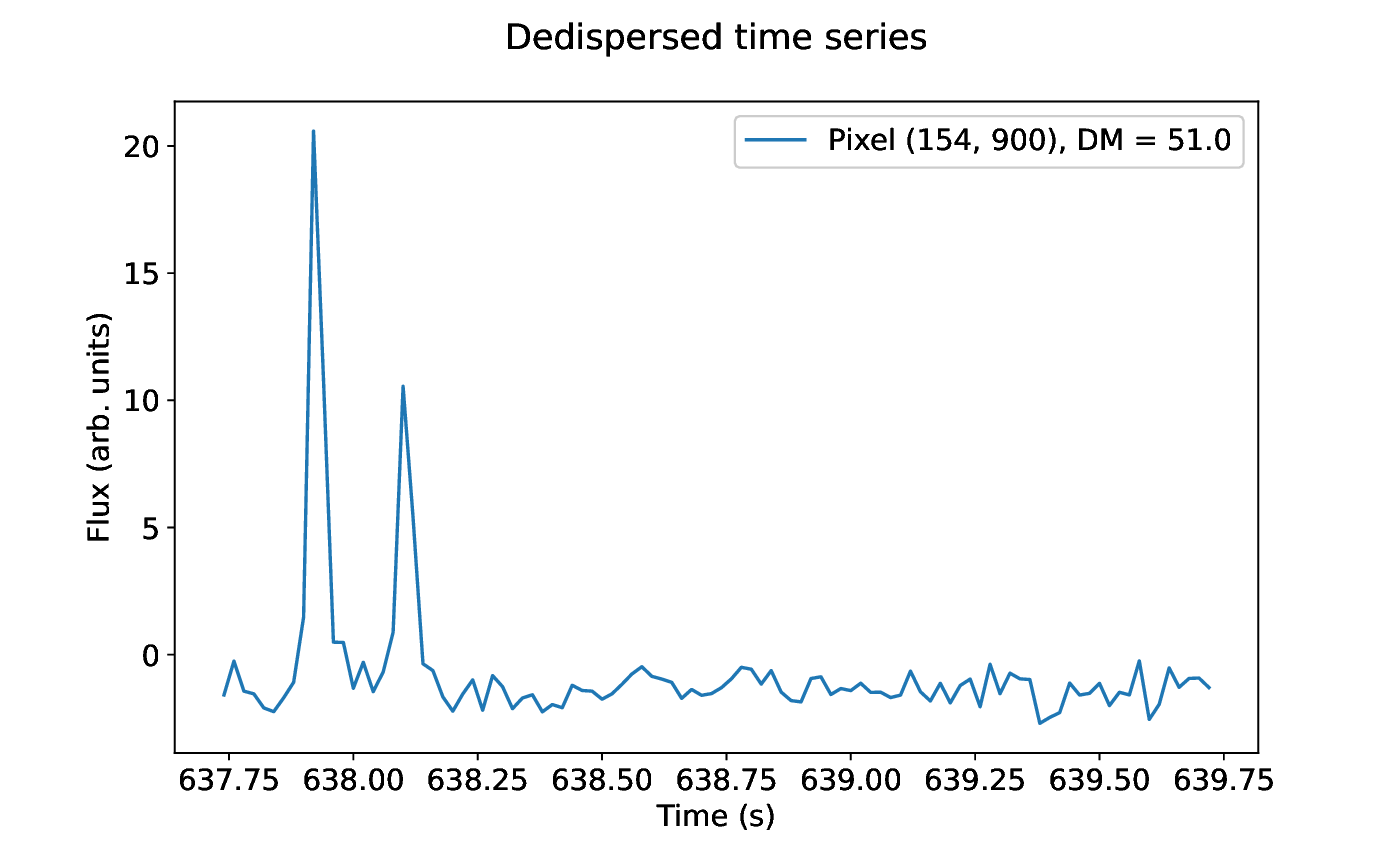} 
	\caption{The brightest detection of B0525+21 with an SNR in excess of 39, followed by another strong pulse.}
	\label{fig:b0525_time_series}
\end{subfigure}
	\caption{\textbf{The detection of B0531+21 and B0525+21 with the BLINK pipeline.}  The single pulse transient search revealed more than a hundred thousand candidate pulses above a SNR threshold of 7. The majority are due to two bright pulsars at 1.2 degrees distance from one another. The first is the rotating neutron star B0531+21 within the Crab nebula, and the second is the pulsar B0525+21. The post-processing software allows the inspection of candidates using a variety of diagnostic plots. These include a detection map displaying candidates at their positions within the FoV, and dedispersed time series plots highlighting flux peaks that triggered the detections.}
	\label{fig:main_findings}
\end{figure*}
% Talk about the implementation

% Context: the BLINK pipeline

\section{Conclusion} \label{sec:conclusion}

The computational cost of imaging is asymptotically lower than the one of beamforming, presenting an opportunity to accelerate the discovery of new millisecond-scale radio transients with widefield interferometric searches. However, the computational advantage is only accessible after having dealt with the significant memory requirements, in the order of up to terabytes, implied by the large number of high time and frequency resolution images. The problem is especially present at frequencies below 300\,MHz, where the dispersive delay affecting a signal is up to tens of seconds. Table \ref{tab:crab_requirements} accurately characterises the requirements of a transient search with the MWA at a relatively low DM range. The amount of computer memory necessary steeply increases when the DM is in the order of hundreds of units, as it is the case for FRBs.

STRIDE tackles the problem by being the first algorithm to decompose the dedispersion problem over the time axis, hence eliminating the requirement of holding on to all images associated with a time span as long as the maximum dispersive delay. Furthermore, by adopting a streaming processing strategy, the dedispersed time series are searched for peaks as soon as enough data points are produced. Only the time series within the ring buffer triggering a candidate detection are saved for later inspection. 

Imaging-based searches for fast transients at low frequencies remain challenging. The dispersive delay still dictates the minimum number of time bins that must be reserved for each time series within the ring buffer. The wide FoV and high angular resolution of modern interferometers result in more than a million pixels, directly affecting the size of the ring buffer and the number of compute operations to be executed. For instance, the image size of $1024^2$ pixels adopted in the presented experiment only covers approximately $9\%$ of the 600\, deg$^2$ wide FoV at 150\, MHz that characterises the MWA in Extended configuration. The wide range of DM values has the same implication.

Despite still requiring some compromises on the search parameter space, STRIDE is the first dedispersion algorithm that makes imaging-based, millisecond-scale transient searches at low frequencies feasible. While initially designed for the MWA, it provides a general strategy employable in a transient search back-end of any interferometer.

\section*{Acknowledgements}

Authors would like to thank Clancy W. James for providing useful comments during the writing of this paper.

This work was supported by resources provided by the Pawsey Supercomputing Research Centre’s Setonix Supercomputer, with funding from the Australian Government and the Government of Western Australia.

This scientific work uses data obtained from Inyarrimanha Ilgari Bundara / the Murchison Radio-astronomy Observatory. We acknowledge the Wajarri Yamaji People as the Traditional Owners and native title holders of the Observatory site. Establishment of CSIRO's Murchison Radio-astronomy Observatory is an initiative of the Australian Government, with support from the Government of Western Australia and the Science and Industry Endowment Fund. Support for the operation of the MWA is provided by the Australian Government (NCRIS), under a contract to Curtin University administered by Astronomy Australia Limited. 

This work was supported by the Australian SKA Regional Centre (AusSRC), Australia’s portion of the international SKA Regional Centre Network (SRCNet), funded by the Australian Government through the Department of Industry, Science, and Resources (DISR; grant SKARC000001). AusSRC is an equal collaboration between CSIRO – Australia’s national science agency, Curtin University, the Pawsey Supercomputing Research Centre, and the University of Western Australia.
%\end{acknowledgement}

\bibliographystyle{elsarticle-harv} 
\bibliography{bibliography.bib}

% PASA uses footnotes, not endnotes. \endnote in this template will behave like \footnote; and \printendnotes will not output anything.
% \printendnotes
% \printbibliography

\appendix

\section{Proofs} \label{sec:proofs}

\begin{theorem}
	There are  $\delta (f) - 1$ side sweeps for each frequency channel $f$in $D_{i,j}$.
\end{theorem}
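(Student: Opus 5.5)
The plan is to show that side sweeps of channel $f$ in $D_{i,j}$ are in bijection with the offsets $d \in \{1,\ldots,\delta(f)-1\}$. The map is $d \mapsto s(f,(j-1)n_t+1,d)$, which is exactly the description of $B_{i,j}$ in Equation \ref{eq:side_sweep_set}.

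I will use the convention fixed in Section \ref{sec:dedisp_problem}. A sweep that enters channel $f$ at time bin $t_e$ occupies exactly the $\delta(f)$ consecutive bins $t_e,\ldots,t_e+\delta(f)-1$ of that channel, by the inner sum of Equation \ref{eq:time_series_full_dedisp}. It then enters channel $f-1$ at its last bin $t_e+\delta(f)-1$. Let $t_0=(j-1)n_t+1$ be the first time bin of the section.

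\textbf{Step 1 (side sweeps have an offset in range).} By definition, a side sweep of channel $f$ entered $f$ at some $t_e<t_0$ and is still traversing $f$ at $t_0$, since it crosses into the section through its left boundary. Its offset is therefore $d=t_0-t_e\geq 1$. For $t_0$ to lie in its occupancy window in $f$ we need $d\leq\delta(f)-1$. So it is identified by $s(f,t_0,d)$ with $1\leq d\leq\delta(f)-1$.

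\textbf{Step 2 (every such offset gives a side sweep).} Conversely, for each $d$ in that range, the sweep $s(F,T(f,t_0,d),0)$ given by Equation \ref{eq:start_time} enters $f$ at $t_e=t_0-d<t_0$. It is still in $f$ at $t_0$, so it is a side sweep of channel $f$.

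\textbf{Step 3 (injectivity).} Distinct offsets give distinct start times $T(f,t_0,d)=t_0-d-g(f)$, because $T$ is strictly decreasing in $d$. Different start times mean different sweeps, so the $\delta(f)-1$ sweeps obtained are pairwise distinct.

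\textbf{Main obstacle: the boundary case $d=\delta(f)-1$.} At $t_0$ that sweep sits in the last bin of $f$, which by convention is also its entry bin into $f-1$. It could therefore seemingly be counted as $s(f-1,t_0,0)$, a side sweep of channel $f-1$, or even a top sweep, causing double counting. I will resolve this by attributing each sweep to the highest channel it occupies at $t_0$. This is the reading under which $B_{i,j}$ ranges over $d\geq 1$ only, and under which channel $f-1$'s own side sweeps have offsets $\geq 1$ there. The same convention excludes $d=0$: a sweep entering $f$ exactly at $t_0$ comes from $f+1$ and is counted there, or as a top sweep when $f=in_f$. With the assignment fixed this way, the count for each channel is exactly $\delta(f)-1$.
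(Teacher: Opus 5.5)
Your proof is correct and takes essentially the same route as the paper. The paper uses the requirement that the exit bin $t_a+\delta(f)-1$ be at least $(j-1)n_t+1$ to confine the entry time $t_a$ of a side sweep into channel $f$ to $\{(j-1)n_t+1-(\delta(f)-1),\ldots,(j-1)n_t\}$, and then counts those $\delta(f)-1$ values; this is exactly your offset range $d=t_0-t_e\in\{1,\ldots,\delta(f)-1\}$, with your explicit injectivity step and boundary-case discussion making precise what the paper leaves implicit.
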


\begin{proof}
	A side sweep enters channel $f$ at some time $t_a$,  $t_a  < (j - 1)n_t + 1$, spends $\delta(f)$ time bins crossing the channel, and leaves the same at some exit time $t_b$, $t_b = t_a + \delta(f) - 1$, $t_b \geq (j - 1)n_t + 1$. Assume the earliest value for $t_a$ is less than $t_a^{lo}$, 
	
	\begin{equation}
		t_a^{lo} = (j - 1)n_t + 1 - (\delta(f) - 1).
	\end{equation}
	Then, its corresponding exit time $t_b$ is
	\begin{align}
		t_b &= t_a + \delta(f) - 1 \\
		&  < t_a^{lo} + \delta(f) - 1 \\
		&< (j - 1)n_t + 1 - \delta(f) + 1 + \delta(f) - 1\\
		& <  (j - 1)n_t + 1,
	\end{align}
	
	contradicting the assumption that this is a side sweep of channel $f$. Conversely, assume $t_a > t_a^{hi}$, $t_a^{hi} = (j - 1)n_t$: then this is not a side sweep by definition. It follows that the entry time $t_a$ of a side sweep for channel $f$ must satisfy the constraint $t_a^{lo} \le t_a \le t_a^{hi}$. There are
	
	\begin{align}
		t_a^{hi} + 1 - t_a^{low}	&= (j - 1)n_t + 1 - ((j - 1)n_t + 1 - (\delta(f) - 1))\\
		& = \delta(f) - 1
	\end{align}
	
	valid time bins for $t_a$, and hence just as many side sweeps entering $D_{i,j}$ through channel $f$.
	
\end{proof}

\section{Author ORCID Identifiers}

\begin{itemize}
	\item Cristian Di Pietrantonio: \url{https://orcid.org/0000-0002-7175-9079}
	\item Marcin Sokolowski: \url{https://orcid.org/0000-0001-5772-338X}
	\item Christopher Harris: \url{https://orcid.org/0000-0001-5237-2250}
	\item Danny C. Price: \url{https://orcid.org/0000-0003-2783-1608}
	\item Randall Wayth: \url{https://orcid.org/0000-0002-6995-4131}
\end{itemize}

\newpage
\onecolumn
\section{List of symbols}

Table \ref{tab:list_of_symbols} provides a summary of the symbols used throughout the paper with a reference to the section or equation where they are first defined.

\begin{table*}[h!]
	\begin{tabularx}{\textwidth}{l X r}
		\hline
		\textbf{Symbol} & \textbf{Definition} & \textbf{Reference}\\\hline\hline
		$A_{i,j}$ & The set of all top sweeps of section $D_{i,j}$. & Equation \ref{eq:top_sweep_set} \\
		$B$ & The number of additional memory slots for the ring buffer to hold a dedispersed time series of significant length. & Equation \ref{eq:buffer_length} \\
		$B_{i,j}$ & The set of all side sweeps of section $D_{i,j}$. & Equation \ref{eq:side_sweep_set} \\
		$D(f, t)$ & A dynamic spectrum illustrating signal intensity as a function of time and frequency. & Section \ref{sec:dedisp_problem} \\
		$D_{i,j}(f, t)$ & The 2D section of a dynamic spectrum covering the $i$-th frequency channel interval and the $j$-th time bin interval. & Section \ref{sec:math_framework} \\
		$D_{i,j}^{x,y}(f, t)$ & The 2D section of the dynamic spectrum of pixel $(x, y)$ covering the $i$-th frequency channel interval and the $j$-th time bin interval. & Section \ref {sec:generalisation} \\
		$d$ & The offset into the intra-channel dispersive delay. & Equation \ref{eq:time_series_full_dedisp} \\
		$d_{start}$ & The number of time bins the sweep has already spent in $f_{top}$. & Algorithm \ref{algo:partial_sweep} \\
		$d_{max}$ & The maximum number of intensity values \texttt{compute\_partial\_sweep} accumulates in one channel.  & Algorithm \ref{algo:partial_sweep} \\
		
		$DM$ & The dispersion measure. & Section \ref{sec:introduction} \\
		$\Delta f$ & The frequency resolution, or width, of a frequency channel. & Section \ref{sec:dedisp_problem}\\
		$\Delta t$ & The time resolution of a time bin. & Section \ref{sec:dedisp_problem}\\
		$\delta(f)$ & The intra-channel dispersive delay in frequency channel $f$, in units of time bins. & Equation \ref{eq:channel_delay}\\
		$F$ & The number total of frequency channels in $D$. Also used as index of the top frequency channel. & Section \ref{sec:motivation} \\
		$\mathcal{F}$ & The set of all frequency channels. & Section \ref{sec:dedisp_problem}\\
		$\mathcal{F}_i$ & The $i$-th frequency channel interval. &  Equation \ref{eq:freq_interval}\\
		$f_{top}$ & The top frequency channel of a section. & Section \ref{sec:algorithm} \\
		$g(f)$ & The time delay, in units of time bins, between channel $F$, included, and channel  $f$, excluded. & Equation \ref{eq:cumulative_delay}\\
		$h$ & The number of values in the dedispersed time series corresponding to completed sweeps and that are ready to be searched for peaks. & Algorithm \ref{algo:ring_buffer_algo}\\
		$I_{i,j}$ &  The set of all images corresponding to the frequency channels in $\mathcal{F}_i$ and the time bins in $\mathcal{T}_j$. & Section \ref{sec:generalisation}\\
		$K$ & A product of physical constants contributing to the dispersive delay.  & Equation \ref{eq:time_delay}  \\
		$L$ & Length of a sweep in units of time bins. & Equation \ref{eq:sweep_length} \\
		$M$ & The number of DM trials. & Section \ref{sec:generalisation} \\
		$N$ & The number of slots in the ring buffer dedicated to a time series. & Section \ref{sec:ring_buffer} \\
		$n_f$ & The number of frequency channels encompassed by a section of a dynamic spectrum. & Section \ref{sec:math_framework} \\
		$n_t$ & The number of time bins encompassed by a section of a dynamic spectrum. & Section \ref{sec:math_framework} \\
		$\nu_f$ & The bottom frequency of the $f$-th frequency channel. & Section \ref{sec:dedisp_problem}\\
		$(p, q)$ & The coordinate pair used to loop over a dispersion path within a dynamic spectrum. & Equation \ref{eq:total_intensity_as_triple_sum}\\
		$r_c$ & The number of consumed slots in the ring buffer of a given time series. & Algorithm \ref{algo:ring_buffer_algo} \\
		$r_s$ & The index of the earliest slot of the ring buffer of a given time series. &  Algorithm \ref{algo:ring_buffer_algo} \\
		$S$ & The time series output array as produced by the algorithm. & Algorithm \ref{algo:ring_buffer_algo}\\
		$S(t)$ & The dedispersed time series of a signal. & Equation \ref{eq:time_series_full_dedisp}\\
		$s$ & The accumulation variable for the intensity of a partial sweep. & Algorithm \ref{algo:partial_sweep} \\
		$s(f, t, d)$ & The sweep passing through channel $f$ at time bin $t$, having already spent the previous $d$ time bins in the same channel $f$. & Section \ref{sec:math_framework}\\
		$T$ & The total number of time bins in $D$. & Section \ref{sec:motivation}\\
		$\mathcal{T}$ & The set of all time bins. & Section \ref{sec:dedisp_problem}\\
		$\mathcal{T}_j$ & The $j$-th time bin interval. &  Equation \ref{eq:time_interval}\\
		$t_s$ & The start time bin of the sweep being currently computed. & Algorithm \ref{algo:dedisp}\\
		$T(f, t, d)$ & The index of the time bin when sweep $s(f, t, d)$ enters the top frequency channel $F$. & Equation \ref{eq:start_time}\\
		$\tau(\nu_{lo}, \nu_{hi})$ & The time delay between the arrival times of a signal at frequencies $\nu_{lo}$ and $\nu_{hi}$. & Equation \ref{eq:time_delay}\\
		$X$, $Y$ & The dimensions of an image. & Section \ref{sec:motivation}\\
		$(x, y)$ & The coordinate pair of an image pixel. & Section \ref{sec:generalisation}\\
		\hline
	\end{tabularx}
	\caption{List of mathematical symbols used throughout the work, each accompanied with a short description and a reference to its definition in the paper.}
	\label{tab:list_of_symbols}
\end{table*}

\end{document}